%
\documentclass[10pt,a4paper]{article}
\usepackage[T1]{fontenc}
%
\usepackage{graphicx}
%
%
\usepackage{color}

%
%
\usepackage{bbm}
\usepackage{mathrsfs}
\usepackage{amsmath}
\usepackage{amssymb}
\usepackage{stmaryrd}

\makeatletter
\newcommand{\xmultimap}[2][]{\ext@arrow 0359\multimapfill@{#1}{#2}}
\newcommand*{\multimapfill@}{%
	\arrowfill@\relbar\relbar\multimap}
\makeatother


\newcommand{\gateset}{\mathscr{G}}
\newcommand{\gatesetoftype}[2]{\mathscr{G}(#1,#2)}
\newcommand{\circset}{\mathit{CIRC}}
\newcommand{\mvalset}{\mathit{BVAL}}
\newcommand{\mtypeset}{\mathit{BTYPE}}
\newcommand{\labOne}{\ell} \newcommand{\labTwo}{k} \newcommand{\labThree}{t} \newcommand{\labFour}{q}
\newcommand{\wtypeOne}{w}
\newcommand{\lcOne}{Q} \newcommand{\lcTwo}{L} \newcommand{\lcThree}{H} \newcommand{\lcFour}{K}
\newcommand{\gateOne}{g} 
\newcommand{\circuitOne}{\mathcal{C}} \newcommand{\circuitTwo}{\mathcal{D}} \newcommand{\circuitThree}{\mathcal{E}}
\newcommand{\mtypeOne}{T} \newcommand{\mtypeTwo}{U}
\newcommand{\emptylc}{\emptyset}
\newcommand{\cidentity}[1]{id_{#1}}
\newcommand{\gateapp}[3]{#1(#2) \to #3}
\newcommand{\unitt}{\mathbbm{1}}
\newcommand{\qubitt}{\mathsf{Qubit}}
\newcommand{\qubitts}{\mathsf{Q}}
\newcommand{\bitt}{\mathsf{Bit}}
\newcommand{\natt}{\mathsf{Nat}}
\newcommand{\circjudgment}[3]{#1 : #2 \to #3}
\newcommand{\mjudgment}[3]{#1 \vdash_w #2 : #3}
\newcommand{\struct}[1]{\bar{#1}}
\newcommand{\concat}[2]{#1::#2}
\newcommand{\width}[1]{\operatorname{width}(#1)}
\newcommand{\discarded}[1]{\operatorname{discarded}(#1)}
\newcommand{\inits}[1]{\operatorname{new}(#1)}
\newcommand{\proves}{\triangleright}


\newcommand{\vsubSet}{\mathit{VSUB}}
\newcommand{\isubSet}{\mathit{ISUB}}
\newcommand{\termSet}{\mathit{TERM}}
\newcommand{\indexSet}{\mathit{INDEX}}
\newcommand{\valSet}{\mathit{VAL}}
\newcommand{\typeSet}{\mathit{TYPE}}
\newcommand{\ptypeSet}{\mathit{PTYPE}}
\newcommand{\varOne}{x} \newcommand{\varTwo}{y} \newcommand{\varThree}{z}
\newcommand{\ivarOne}{i} \newcommand{\ivarTwo}{j} \newcommand{\ivarThree}{e}
\newcommand{\contextOne}{\Gamma}
\newcommand{\pcontextOne}{\Phi}
\newcommand{\icontextOne}{\Theta} 
\newcommand{\termOne}{M} \newcommand{\termTwo}{N} 
\newcommand{\indexOne}{I} \newcommand{\indexTwo}{J} \newcommand{\indexThree}{E}
\newcommand{\valOne}{V} \newcommand{\valTwo}{W} \newcommand{\valThree}{X} \newcommand{\valFour}{Y} \newcommand{\valFive}{Z}
\newcommand{\typeOne}{A} \newcommand{\typeTwo}{B} \newcommand{\typeThree}{C}
\newcommand{\ptypeOne}{P} \newcommand{\ptypeTwo}{R}
 
\newcommand{\emptycontext}{\emptyset}

\newcommand{\emptysub}{\emptyset}
\newcommand{\unitv}{*}
\newcommand{\tuple}[2]{\langle#1,#2\rangle}
\newcommand{\abs}[3]{\lambda #1_{#2}.#3}
\newcommand{\app}[2]{#1\,#2}
\newcommand{\liftoperator}{\operatorname{\mathsf{lift}}}
\newcommand{\lift}[1]{\liftoperator #1}
\newcommand{\force}[1]{\operatorname{\mathsf{force}} #1}
\newcommand{\boxoperator}{\operatorname{\mathsf{box}}}
\newcommand{\boxt}[2]{\boxoperator_{#1} #2}
\newcommand{\boxedCirc}[3]{(#1,#2,#3)}
\newcommand{\applyoperator}{\operatorname{\mathsf{apply}}}
\newcommand{\apply}[2]{\applyoperator(#1,#2)}
\newcommand{\letoperator}{\mathsf{let}}
\newcommand{\letin}[3]{\letoperator\; #1 = #2 \;\mathsf{in}\; #3}
\newcommand{\dest}[4]{\letoperator\; \tuple{#1}{#2} = #3 \;\mathsf{in}\; #4}
\newcommand{\returnoperator}{\mathsf{return}}
\newcommand{\return}[1]{\returnoperator\; #1}
\newcommand{\foldoperator}{\operatorname{\mathsf{fold}}}
\newcommand{\fold}[3]{\foldoperator_{#1}\, {#2}\; {#3}}
\newcommand{\nil}{\mathsf{nil}}
\newcommand{\cons}[2]{\mathsf{cons}\; {#1}\; {#2}}
\newcommand{\lineararrow}{\multimap}
\newcommand{\arrowt}[4]{#1 \multimap_{#3,#4} #2}
\newcommand{\bang}[1]{\operatorname{!}#1}
\newcommand{\circt}[3]{\operatorname{\mathsf{Circ}}^{#1}(#2,#3)}
\newcommand{\tensor}[2]{#1 \otimes #2}
\newcommand{\listt}[2]{\mathsf{List}^{#1}\,{#2}}
\newcommand{\natOne}{n}  
\newcommand{\iplus}[2]{#1 + #2}
\newcommand{\iminus}[2]{#1 - #2}
\newcommand{\imult}[2]{#1 \times #2}
\newcommand{\imax}[2]{\mathsf{max}(#1,#2)}
\newcommand{\imaximum}[3]{\mathsf{max}_{#1 < #2}\,#3}
\newcommand{\cjudgment}[6]{#1;#2;#3 \vdash_c #4 : #5 ; #6}
\newcommand{\vjudgment}[5]{#1;#2;#3 \vdash_v #4 : #5}
\newcommand{\wfjudgment}[2]{#1 \vdash #2}
\newcommand{\subtypejudgment}[3]{#1 \vdash_s #2 <: #3}
\newcommand{\sametypejudgment}[3]{#1 \vdash_s #2 <:> #3}
\newcommand{\leqjudgment}[3]{#1 \vDash #2 \leq #3}
\newcommand{\eqjudgment}[3]{#1 \vDash #2 = #3}
\newcommand{\interpret}[1]{\llbracket #1 \rrbracket}
\newcommand{\config}[2]{(#1,#2)}
\newcommand{\eval}{\Downarrow}
\newcommand{\freshlabelsfunction}{\operatorname{freshlabels}}
\newcommand{\freshlabels}[1]{\freshlabelsfunction(#1)}
\newcommand{\appendfunction}{\operatorname{append}}
\newcommand{\append}[3]{\appendfunction(#1,#2,#3)}
\newcommand{\sub}[2]{[#1/#2]}
\newcommand{\isub}[2]{\{#1/#2\}}
\newcommand{\cconfigjudgment}[5]{#1 \vdash #2 : #3 ; #4; #5}
\newcommand{\vconfigjudgment}[4]{#1 \vdash #2 : #3 ; #4}
\newcommand{\realizes}[3]{#1 \Vdash_{#2} #3}
\newcommand{\reducible}[4]{#1 \Vdash_{#2}^{#3} #4}
\newcommand{\vsubOne}{\gamma}
\newcommand{\isubOne}{\theta}

\newcommand{\vimplements}[3]{#1 \vDash_{#2} #3}
\newcommand{\iimplements}[2]{#1 \vDash #2}

\newcommand{\rcount}[1]{\#(#1)}

 


\newcommand{\maxf}[2]{\operatorname{max}(#1,#2)}

\newcommand{\forceindent}{\phantom{m}}

\newcommand{\PQ}{\textsf{Proto-Quipper}}
\newcommand{\PQM}{\textsf{Proto-Quipper-M}}

\newcommand{\PQS}{\textsf{Proto-Quipper-S}}
\newcommand{\PQD}{\textsf{Proto-Quipper-D}}

\newcommand{\PQR}{\textsf{Proto-Quipper-R}}
\newcommand{\Haskell}{\texttt{Haskell}}
\newcommand{\LiquidHaskell}{\texttt{Liquid Haskell}}
\newcommand{\Quipper}{\texttt{Quipper}}
\newcommand{\Qiskit}{\texttt{Qiskit}}
\newcommand{\Cirq}{\texttt{Cirq}}

\newcommand{\crl}{\textsf{CRL}}

%
%
\usepackage{a4wide}
\usepackage{mathpartir}
\usepackage{float}
\usepackage{tabularx}
\usepackage[inference]{semantic}
\usepackage{esvect}
\usepackage{quantikz}
\usepackage{multicol}
\usepackage{authblk}

\usepackage{amsthm}
\newtheorem{definition}{Definition}
\newtheorem{theorem}{Theorem}
\newtheorem{lemma}{Lemma}
\begin{document}
\title{Circuit Width Estimation via Effect Typing\\ and Linear Dependency (Long Version)}
%
%
\author[1,2]{Andrea Colledan}
\author[1,2]{Ugo Dal Lago}
\affil[1]{University of Bologna, Italy}
\affil[2]{INRIA Sophia Antipolis, France}
\date{}
\maketitle              
\begin{abstract}
Circuit description languages are a class of quantum programming languages in which programs are classical and produce a \textit{description} of a quantum computation, in the form of a \textit{quantum circuit}. Since these programs can leverage all the expressive power of high-level classical languages, circuit description languages have been successfully used to describe complex and practical quantum algorithms, whose circuits, however, may involve many more qubits and gate applications than current quantum architectures can actually muster. In this paper, we present \PQR, a circuit description language endowed with a linear dependent type-and-effect system capable of deriving parametric upper bounds on the width of the circuits produced by a program. We prove both the standard type safety results and that the resulting resource analysis is correct with respect to a big-step operational semantics. We also show that our approach is expressive enough to verify realistic quantum algorithms.

\end{abstract}

\section{Introduction}

With the promise of providing efficient algorithmic 
solutions to many problems \cite{grover,hhl,qft}, some of which are traditionally believed to be intractable \cite{shor}, quantum computing is the subject of intense investigation by various research communities within computer science, not least that of programming language theory \cite{quantum-languages,survey-gay,survey-selinger}. Various proposals for idioms capable of tapping into this new computing paradigm have appeared in the literature since the late 1990s. Some of these approaches turn out to be fundamentally new \cite{qpl,qml,qgcl}, while many others are strongly inspired by classical languages and traditional programming paradigms \cite{qcl,quantum-lambda-calculus,proto-quipper-s,qwire}.

One of the major obstacles to the practical adoption of quantum algorithmic 
solutions is the fact that despite huge efforts by scientists and 
engineers alike, it seems that reliable quantum hardware, contrary to classical one, does not scale too easily: although quantum architectures with up to a couple hundred qubits have recently seen the light \cite{osprey,sycamore,jiuzhang}, it is not yet clear whether the so-called quantum advantage \cite{quantum-supremacy} is a concrete possibility, given the tremendous challenges posed by the quantum decoherence problem \cite{quantum-decoherence}.

This entails that software which makes use of quantum hardware 
must be designed with great care: whenever part of a computation has to be run on quantum hardware, the amount of resources it 
needs, and in particular the amount of qubits it uses, should be kept to a 
minimum. More generally, a fine control over the low-level aspects of the 
computation, something that we willingly abstract from in most cases when dealing with classical computations, should be exposed to the programmer in the quantum case.
This, in turn, has led to the development and adoption of many domain-specific 
programming languages and libraries in which the programmer \textit{explicitly} manipulates qubits and quantum circuits, while still making use of all the features of a high-level classical programming language. This is the case of the \Qiskit\ and \Cirq\ libraries \cite{quantum-software}, but also of the \Quipper\ language \cite{quipper-intro,quipper}.

At the fundamental level, \Quipper\ is a circuit description language embedded in \Haskell. Because of this, \Quipper\ inherits all the expressiveness of the high level, higher-order functional programming language that is its host, but for the same reason it also lacks a formal semantics. Nonetheless, over the past few years, a number of calculi, collectively known as the \PQ\ language family, have been developed to formalize interesting fragments and extensions of \Quipper\ and its type system \cite{proto-quipper-s,proto-quipper-m}. Extensions include, among others, dynamic lifting \cite{proto-quipper-l,proto-quipper-dyn,proto-quipper-k} and dependent types \cite{proto-quipper-d,proto-quipper-d-intro}, but resource analysis is still a rather unexplored research direction in the \PQ\ community \cite{gate-count}.

The goal of this work is to show that type systems indeed enable the 
possibility of reasoning about the size of the circuits produced by a 
\PQ\ program. Specifically, we show how linear dependent types in the 
form given by Gaboardi and Dal Lago \cite{linear-dependent-types-relative-completeness,linear-dependent-types-privacy,linear-dependent-types-cbv,geometry-of-types} can be adapted to 
\PQ, allowing to derive upper bounds on circuit widths that 
are parametric on the size of the problem. This enables a form of static analysis of the resource consumption of circuit families and, consequently, of the quantum algorithms described in the language. Technically, a key ingredient of this analysis, besides linear dependency, is a novel form of effect typing in which the quantitative information coming from linear dependency informs the effect system and allows it to keep circuit widths under control.

The rest of the paper is organized as follows. Section \ref{sect:overview} informally explores the problem of estimating the width of circuits produced by \Quipper, while also introducing the language.
Section \ref{sect:protoquipper} provides a more formal definition of the \PQ\ language. In particular, it gives an overview of the system of simple types due to Selinger and Rios \cite{proto-quipper-m}, which however is not designed to reason about the size of circuits.
We then move on to the most important technical contribution of this work, namely the linear dependent and effectful type system, which is introduced in Section \ref{sect:typesystem} and proven to guarantee both type safety and a form of total correctness in Section \ref{sect:safetycorrectness}.
Section \ref{sec:practical-examples} is dedicated to an example of a practical application of our type and effect system, that is, a program that builds the Quantum Fourier Transform (QFT) circuit \cite{qft,nielsen-chuang} and which is verified to do so without any ancillary qubits.

\section{An Overview on Circuit Width Estimation}\label{sect:overview}
\label{sec:overview}

\Quipper\ allows programmers to describe quantum circuits in a high-level and elegant way, using both gate-by-gate and circuit transformation approaches. \Quipper\ also supports  hierarchical and parametric circuits, thus promoting a view in which circuits become first-class citizens. \Quipper\ has been shown to be scalable, in the sense that it has been effectively used to describe complex quantum algorithms that easily translate to circuits involving trillions of gates applied to millions of qubits. The language allows the programmer to optimize the circuit, e.g. by using ancilla qubits for the sake of reducing the circuit depth, or recycling qubits that are no longer needed. 

One feature that \Quipper\ lacks is a methodology for \emph{statically} proving that important parameters --- such as the the width --- of the 
underlying circuit are below certain limits, which of course can be parametric on the input size of the circuit. If this kind of analysis were available, then it would be possible to derive bounds on the number of qubits needed to solve any instance of a problem, and ultimately to know in advance how big of an instance can be \textit{possibly} solved given a fixed amount of qubits. 

In order to illustrate the kind of scenario we are reasoning about, this section offers some simple examples of \Quipper\ programs, showing in what sense we can think of capturing the quantitative information that we are interested in through types and effect systems and linear dependency. We proceed at a very high level for now, without any ambition of formality.

Let us start with the example of Figure \ref{fig:dumbnot}. The \Quipper\ function on the left builds the quantum circuit on the right: an (admittedly  contrived) implementation of the quantum not operation. The \texttt{dumbNot} function implements negation using a controlled not gate and an ancillary qubit \texttt{a}, which is initialized and discarded within the body of the function. This qubit does not appear in the interface of the circuit, but it clearly adds to its overall width, which is $2$.

\begin{figure}[h]
	\centering
	\fbox{
	\begin{minipage}{0.475\textwidth}
		\texttt{\noindent
		dumbNot :: Qubit -> Circ Qubit\\
		dumbNot q = do\\
		\forceindent a <- qinit True\\
		\forceindent (q,a) <- controlled\_not q a\\
		\forceindent qdiscard a\\
		\forceindent return q
		}
	\end{minipage}
	\begin{minipage}{0.475\textwidth}
		\centering
		\begin{quantikz}[column sep=1.5em]
			\setwiretype{n}& \lstick{$\ket{1}$} & \ctrl{1} \setwiretype{q} & \ground{} \\
			\lstick{$q$} && \targ & && \rstick{$q$}
		\end{quantikz}
	\end{minipage}
	}
\caption{An implementation of the quantum not operation using an ancilla.}
\label{fig:dumbnot}
\end{figure}

Consider now the higher-order function in Figure \ref{fig:iter}. This function takes as input a circuit building function \texttt{f}, an integer \texttt{n} and describes the circuit obtained by applying \texttt{f}'s circuit \texttt{n} times to the input qubit \texttt{q}.
It is easy to see that the width of the circuit produced in 
output by \texttt{iter dumbNot n} is equal to $2$, even though, overall, the number of qubits initialized during the computation is equal to $n$. The point is that each ancilla is created only \emph{after} the previous one has been discarded, thus enabling a form of qubit recycling.

\begin{figure}[h]
	\centering
	\fbox{
		\begin{minipage}{0.45\linewidth}
			\texttt{\noindent
				iter :: (Qubit -> Circ Qubit)\\ -> Int -> Qubit -> Circ Qubit\\
				iter f 0 q = return q\\
				iter f n q = do\\
				\forceindent q <- f q\\
				\forceindent iter f (n-1) q
			}
		\end{minipage}
		\begin{minipage}{0.5\linewidth}
			\centering
			$\underbrace{
			\begin{quantikz}[column sep=0.8em]
				\setwiretype{n}& \lstick{$\ket{1}$} & \ctrl{1} \setwiretype{q} & \ground{}  & \setwiretype{n} && \lstick{$\ket{1}$} & \ctrl{1} \setwiretype{q} & \ground{} & \setwiretype{n} \ \ldots \ &&&\lstick{$\ket{1}$} & \ctrl{1} \setwiretype{q} & \ground{} \\
				\lstick{$q$} && \targ{} & &&& &\targ{}&& \ \ldots \ &&&& \targ{}&& \rstick{$q$}
				\end{quantikz}
			}_{n \text{ times}}$
		\end{minipage}
	}
	\caption{A higher-order function which iterates a circuit-building function \texttt{f} on an input qubit \texttt{q} and the result of its application to the \texttt{dumbNot} function from Figure \ref{fig:dumbnot}.}
	\label{fig:iter}
\end{figure}

Is it possible to statically analyze the width of the circuit produced in output by \texttt{iter dumbNot n} so as to conclude that it is constant and equal to $2$? What techniques can we use? Certainly, the presence of higher-order types complicates the problem, already in itself non-trivial. The approach we propose in this paper is based on two ingredients. The first is the so-called effect typing \cite{types-and-effects}. In this context, the effect produced by the program is nothing more than the circuit and therefore it is natural to think of an effect system in which the width of such circuit, and only that, is exposed. Therefore, the arrow type $\typeOne \to \typeTwo$ should be decorated with an expression indicating the width of the circuit produced by the corresponding function when applied to an argument of type $\typeOne$. Of course, the width of an individual circuit is a natural number, so it would make sense to annotate the arrow with a natural number. For technical reasons, however, it will also be necessary to keep track of another natural number, corresponding to the amount of wire resources that the function captures from the surrounding environment. This necessity stems from a need to keep close track of wires even in the presence of data hiding, and will be explained in further detail in Section \ref{sec:incepting-linear-dependency-and-effect-typing}.

Under these premises, the \texttt{dumbNot} function would receive type $\qubitt \to_{2,0} \qubitt$, meaning that it takes as input a qubit and produces a circuit of width $2$ which outputs a qubit.
Note that the second annotation is $0$, since we do not capture anything from the function's environment, let alone a wire. Consequently, because \texttt{iter} iterates in sequence and because the ancillary qubit in \texttt{dumbNot} can be reused, the type of \texttt{iter dumbNot n} would also be $\qubitt \to_{2,0} \qubitt$.

\begin{figure}[h]
	\centering
	\fbox{
		\begin{minipage}{0.49\linewidth}
			\texttt{\noindent
				hadamardN :: [Qubit] -> Circ [Qubit]\\
				hadamardN [] = return []\\
				hadamardN (q:qs)  = do\\
				\forceindent q <- hadamard  q\\
				\forceindent qs <- hadamardN qs\\
				\forceindent return (q:qs)
			}
		\end{minipage}
		\begin{minipage}{0.46\linewidth}
			\centering
			\begin{quantikz}[column sep=2em, row sep=1em]
				\lstick{$q$} & \gate{H} & \rstick{$q$}\\
				\lstick{$\mathit{qs}_1$} & \gate{H} & \rstick{$\mathit{qs}_1$}\\
				\wave&&\\
				\lstick{$\mathit{qs}_n$} & \gate{H} & \rstick{$\mathit{qs}_n$}
			\end{quantikz}
		\end{minipage}
	}
	\caption{The \texttt{hadamardN} function implements a circuit family where circuits have width linear in their input size.}
	\label{fig:hadamardn}
\end{figure}

Let us now consider a slightly different situation, in which the width of the produced circuit is not constant, but rather increases proportionally to the circuit's input size. Figure \ref{fig:hadamardn} shows a \Quipper\ function that returns a circuit on $n$ qubits in which the Hadamard gate is applied to each qubit. This simple circuit represents the preprocessing phase of many quantum algorithms, including Deutsch-Josza \cite{deutsch-jozsa} and Grover's \cite{grover}. It is obvious that this function works on inputs of arbitrary size, and therefore we can interpret it as a circuit family, parametrized on the length of the input list of qubits. This quantity, although certainly a natural number, is unknown statically and corresponds precisely to the width of the produced circuit. A question therefore arises as to whether the kind of effect typing we briefly hinted at in the previous paragraph is capable of dealing with such a function. Certainly, the expressions used to annotate arrows cannot be, like in the previous case, mere \emph{constants}, as they clearly depend on the size of the input list. Is there a way to reflect this dependency in types? Certainly, one could go towards a fully-fledged notion of dependent types, like the ones proposed in \cite{proto-quipper-d}, but a simpler approach, in the style of Dal Lago and Gaboardi's linear dependent types \cite{linear-dependent-types-relative-completeness,linear-dependent-types-privacy,linear-dependent-types-cbv,geometry-of-types} turns out to be enough for this purpose. This is precisely the route that we follow in this paper. In this approach, terms can indeed appear in types, but that is only true for a very restricted class of terms, disjoint from the ordinary ones, called \emph{index terms}. As an example, the type of the function \texttt{hadamardN} above could become $\listt{\ivarOne}{\qubitt} \to_{\ivarOne,0} \listt{\ivarOne}{\qubitt}$, where $\ivarOne$ is an \textit{index variable}. The meaning of the type would thus be that \texttt{hadamardN} takes as input any list of qubits of length $\ivarOne$ and produces a circuit of width at most $\ivarOne$ which outputs $\ivarOne$ qubits. The language of indices is better explained in Section \ref{sec:incepting-linear-dependency-and-effect-typing}, but in general we can say that indices are arithmetical expressions over natural numbers and index variables, and can thus express non-trivial dependencies between input sizes and corresponding circuit widths.

\section{The \textsf{Proto-Quipper Language}}\label{sect:protoquipper}
\label{sec:the-proto-quipper-language}

This section aims at introducing the \PQ\ family of calculi to the non-specialist, without any form of resource analysis. At its core, \PQ\ is a linear lambda calculus with bespoke constructs to build and manipulate circuits. Circuits are built as the side-effect of a computation, behind the scenes, but they can also appear and be manipulated as data in the language.

\begin{figure}[h]
	\centering
	\fbox{\parbox{.98\textwidth}{\centering
		\begin{tabular}{l l r l}
			Types	& $\typeSet$	& $\typeOne,\typeTwo$ & $
			::= \unitt
			\mid \wtypeOne
			\mid \bang{\typeOne}
			\mid \tensor{\typeOne}{\typeTwo}
			\mid \typeOne \lineararrow \typeTwo
			\mid \listt{}{\typeOne}
			\mid \circt{}{\mtypeOne}{\mtypeTwo}$\\
			Parameter types		& $\ptypeSet$	& $\ptypeOne,\ptypeTwo$ & $
			::= \unitt
			\mid \bang{\typeOne}
			\mid \tensor{\ptypeOne}{\ptypeTwo}
			\mid \listt{}{\ptypeOne}
			\mid \circt{}{\mtypeOne}{\mtypeTwo}$\\
			Bundle types	& $\mtypeset$	& $\mtypeOne,\mtypeTwo$ & $
			::= \unitt
			\mid \wtypeOne
			\mid \tensor{\mtypeOne}{\mtypeTwo}$
		\end{tabular}
	}}
	\caption{\PQ\ types.}
	\label{fig:pq-types}
\end{figure}

The types of \PQ\ are given in Figure \ref{fig:pq-types}. Speaking at a high level, we can say that \PQ\ types are generally linear. In particular, $\wtypeOne\in\{\bitt,\qubitt\}$ is a \textit{wire type} and is linear, while $\lineararrow$ is the linear arrow constructor. A subset of types, called \textit{parameter types}, represent the values of the language that are \textit{not} linear and that can therefore be copied. Any term of type $\typeOne$ can be \textit{lifted} into a duplicable parameter of type $\bang{\typeOne}$ if its type derivation does not require the use of linear resources.

\begin{figure}[h]
	\centering
	\fbox{\parbox{.98\textwidth}{\centering
	\begin{tabular}{l l r l}
		Terms	& $\termSet$	& $\termOne,\termTwo$ & $ ::= \app{\valOne}{\valTwo}
		\mid \dest{\varOne}{\varTwo}{\valOne}{\termOne}
		\mid \force{\valOne}
		\mid \boxt{\mtypeOne}{\valOne}$\\&&&$
		\mid \apply{\valOne}{\valTwo}
		\mid \return{\valOne}
		\mid \letin{\varOne}{\termOne}{\termTwo}$\\
		Values	& $\valSet$	& $\valOne,\valTwo$ & $
		::= \unitv
		\mid \varOne
		\mid \labOne
		\mid \abs{\varOne}{\typeOne}{\termOne}
		\mid \lift{\termOne}
		\mid \boxedCirc{\struct\labOne}{\circuitOne}{\struct{\labTwo}}
		\mid \tuple{\valOne}{\valTwo}$\\&&&$
		\mid \nil
		\mid \cons{\valOne}{\valTwo}
		\mid \fold{}{\valOne}{\valTwo}$\\
		Wire bundles	& $\mvalset$	& $\struct\labOne,\struct\labTwo$ & $
		::= \unitv
		\mid \labOne
		\mid \tuple{\struct\labOne}{\struct\labTwo}$
	\end{tabular}	
}}
	\caption{\PQ\ syntax.}
	\label{fig:pq-syntax}
\end{figure}

Now, let us informally dissect the language as presented in Figure \ref{fig:pq-syntax}, starting with the language of values. The main constructs of interest are  \textit{labels} and \textit{boxed circuits}.
A label $\labOne$ represents a reference to a free wire of the underlying circuit being built and is attributed a wire type $\wtypeOne\in\{\bitt,\qubitt\}$. Labels have to be treated linearly due to the no-cloning property of quantum states \cite{nielsen-chuang}. Arbitrary structures of labels form a subset of values which we call \textit{wire bundles} and which are given \textit{bundle types}.
On the other hand, a boxed circuit $\boxedCirc{\struct\labOne}{\circuitOne}{\struct\labTwo}$ represents a circuit object $\circuitOne$ as a datum within the language, together with its input and output interfaces, given as wire bundles $\struct\labOne$ and $\struct\labTwo$. Such a value is given type $\circt{}{\mtypeOne}{\mtypeTwo}$, where bundle types $\mtypeOne$ and $\mtypeTwo$ are the input and output types of the circuit, respectively. Boxed circuits can be copied, manipulated by primitive functions and, more importantly, applied to the underlying circuit.
This last operation, which lies at the core of \PQ's circuit-building capabilities, is possible thanks to the $\applyoperator$ operator. This operator takes as first argument a boxed circuit $\boxedCirc{\struct\labOne}{\circuitOne}{\struct\labTwo}$ and appends $\circuitOne$ to the underlying circuit $\circuitTwo$. How does $\applyoperator$ know \textit{where} exactly in $\circuitTwo$ to apply $\circuitOne$? Thanks to a second argument: a bundle of wires $\struct\labThree$ coming from the free output wires of $\circuitTwo$, which identifies the exact location where $\circuitOne$ is supposed to be appended.

The language is expected to be endowed with constant boxed circuits corresponding to fundamental gates (e.g. Hadamard, controlled not, etc.), but the programmer can also introduce their own boxed circuits via the $\boxoperator$ operator. Intuitively, $\boxoperator$ takes as input a circuit-building function and evaluates it in a sandboxed environment, on dummy arguments, in a way that leaves the underlying circuit unchanged. This evaluation produces a standalone circuit $\circuitOne$, which is then returned by the $\boxoperator$ operator as a boxed circuit $\boxedCirc{\struct\labOne}{\circuitOne}{\struct\labTwo}$.

Figure \ref{fig:pq-dumbnot} shows the \PQ\ term corresponding to the \Quipper\ program in Figure \ref{fig:dumbnot}, as an example of the use of the language. Note that $\letin{\tuple{\varOne}{\varTwo}}{\termOne}{\termTwo}$ is syntactic sugar for $\letin{\varThree}{\termOne}{\dest{\varOne}{\varTwo}{\varThree}{\termTwo}}$.  The $\mathit{dumbNot}$ function is given type $\qubitt \lineararrow \qubitt$ and builds the circuit shown in Figure \ref{fig:dumbnot} when applied to an argument.

\begin{figure}[h]
	\centering
	\fbox{\parbox{.98\textwidth}{\centering
	\begin{align*}
		\mathit{dumbNot} \triangleq \abs{q}{\qubitt}{\,&
			\letin{a}{\apply{\mathsf{INIT_1}}{\unitv}}{\\&
				\letin{\tuple{q}{a}}{\apply{\mathsf{CNOT}}{\tuple{q}{a}}}{\\&
					\letin{\_}{\apply{\mathsf{DISCARD}}{a}}{\\&
						\return{q}}}}}
\end{align*}}}
\caption{An example \PQ\ program. $\mathsf{INIT_1,CNOT}$ and $\mathsf{DISCARD}$ are primitive boxed circuits implementing the corresponding elementary operations.}
\label{fig:pq-dumbnot}
\end{figure}

On the classical side of things, it is worth mentioning that \PQ\ as presented in this section does \textit{not} support general recursion. A limited form of recursion on lists is instead provided via a primitive $\foldoperator$ constructor, which takes as argument a (copiable) step function of type $\bang{((\tensor{\typeTwo}{\typeOne}) \lineararrow \typeTwo)}$, an initial value of type $\typeTwo$, and constructs a function of type $\listt{}{\typeOne} \multimap \typeTwo$ implementing the fold of the step function over the input list. Although this workaround is not enough to recover the full power of general recursion, it appears that it is enough to describe many quantum algorithms. Figure \ref{fig:pq-fold} shows an example of the use of $\foldoperator$ to reverse a list. Note that $\abs{\tuple{\varOne}{\varTwo}}{\tensor{\typeOne}{\typeTwo}}{\termOne}$ is syntactic sugar for $\abs{\varThree}{\tensor{\typeOne}{\typeTwo}}{\dest{\varOne}{\varTwo}{\varThree}{\termOne}}$.

\begin{figure}[h]
	\centering
	\fbox{\parbox{.98\textwidth}{\centering
	\begin{align*}
		\mathit{rev} \triangleq 
		\fold{}{\lift{(\abs{\tuple{\mathit{revList}}{q}}{\tensor{\listt{}{\qubitt}}{\qubitt}}{\return{(\cons{q}{\mathit{revList}})}})}}{\nil}
\end{align*}}}
	\caption{Function \textit{rev} reverses a list of qubits.}
	\label{fig:pq-fold}
\end{figure}

To conclude this section, we just remark how all of the \Quipper\ programs shown in Section \ref{sec:overview} can be encoded in \PQ. However, \PQ's system of simple types in unable to tell us anything about the resource consumption of these programs. Of course, one could run \texttt{hadamardN} on a concrete input and examine the size of the circuit produced at run-time, but this amounts to \textit{testing}, not \textit{verifying} the program, and lacks the qualities of staticity and parametricity that we seek.

\section{Incepting Linear Dependency and Effect Typing}\label{sect:typesystem}
\label{sec:incepting-linear-dependency-and-effect-typing}

We are now ready to expand on the informal definition of the \PQ\ language given in Section \ref{sec:the-proto-quipper-language}, to reach a formal definition of \PQR: a linearly and dependently typed language whose type system supports the derivation of upper bounds on the width of the circuits produced by programs.

\subsection{Types and Syntax of \PQR}

\begin{figure}[!ht]
	\centering
	\fbox{\parbox{.98\textwidth}{\centering
	\begin{tabular}{l l r l}
		Types	& $\typeSet$	& $\typeOne,\typeTwo$ & $
		::= \unitt
		\mid \wtypeOne
		\mid \bang{\typeOne}
		\mid \tensor{\typeOne}{\typeTwo}
		\mid \arrowt{\typeOne}{\typeTwo}{\indexOne}{\indexTwo}
		\mid \listt{\indexOne}{\typeOne}
		\mid \circt{\indexOne}{\mtypeOne}{\mtypeTwo}$\\
		Param. types		& $\ptypeSet$	& $\ptypeOne,\ptypeTwo$ & $
		::= \unitt
		\mid \bang{\typeOne}
		\mid \tensor{\ptypeOne}{\ptypeTwo}
		\mid \listt{\indexOne}{\ptypeOne}
		\mid \circt{\indexOne}{\mtypeOne}{\mtypeTwo}$\\
		Bundle types	& $\mtypeset$	& $\mtypeOne,\mtypeTwo$ & $
		::= \unitt
		\mid \wtypeOne
		\mid \tensor{\mtypeOne}{\mtypeTwo}
		\mid \listt{\indexOne}{\mtypeOne},$\\
		&&&\\
		Terms	& $\termSet$	& $\termOne,\termTwo$ & $ ::= \app{\valOne}{\valTwo}
		\mid \dest{\varOne}{\varTwo}{\valOne}{\termOne}
		\mid \force{\valOne}
		\mid \boxt{\mtypeOne}{\valOne}$\\&&&$
		\mid \apply{\valOne}{\valTwo}
		\mid \return{\valOne}
		\mid \letin{\varOne}{\termOne}{\termTwo}$\\
		Values	& $\valSet$	& $\valOne,\valTwo$ & $
		::= \unitv
		\mid \varOne
		\mid \labOne
		\mid \abs{\varOne}{\typeOne}{\termOne}
		\mid \lift{\termOne}
		\mid \boxedCirc{\struct\labOne}{\circuitOne}{\struct{\labTwo}}
		\mid \tuple{\valOne}{\valTwo}$\\&&&$
		\mid \nil
		\mid \cons{\valOne}{\valTwo}
		\mid \fold{\ivarOne}{\valOne}{\valTwo}$\\
		Wire bundles	& $\mvalset$	& $\struct\labOne,\struct\labTwo$ & $
		::= \unitv
		\mid \labOne
		\mid \tuple{\struct\labOne}{\struct\labTwo}
		\mid \nil
		\mid \cons{\struct\labOne}{\struct\labTwo}$\\	
		Indices & $\indexSet$ & $\indexOne,\indexTwo$ & $
		::= \ivarOne
		\mid \natOne
		\mid \iplus{\indexOne}{\indexTwo}
		\mid \iminus{\indexOne}{\indexTwo}
		\mid \imult{\indexOne}{\indexTwo}
		\mid \imax{\indexOne}{\indexTwo}	
		\mid \imaximum{\ivarOne}{\indexOne}{\indexTwo}$
	\end{tabular}
	}}
	\caption{\PQR\ syntax and types.}
	\label{fig:pqr-syntax}
\end{figure}

The types and syntax of \PQR\ are given in Figure \ref{fig:pqr-syntax}.
As we mentioned, one of the key ingredients of our type system are the index terms which we annotate standard \PQ\ types with. These indices provide quantitative information about the elements of the resulting types, in a manner reminiscent of refinement types \cite{refinement-ml,liquid-types}. In our case, we are primarily concerned with circuit width, which means that the natural starting point of our extension of \PQ\ is precisely the circuit type $\circt{}{\mtypeOne}{\mtypeTwo}$: $\circt{\indexOne}{\mtypeOne}{\mtypeTwo}$ has elements the boxed circuits of input type $\mtypeOne$, output type $\mtypeTwo$, \textit{and width bounded by} $\indexOne$. Term $\indexOne$ is precisely what we call an index, that is, an arithmetical expression denoting a natural number. Looking at the grammar for indices, their interpretation is fairly straightforward: $\natOne$ is a natural number, $\ivarOne$ is an index variable, $\iplus{\indexOne}{\indexTwo},\imult{\indexOne}{\indexTwo}$ and $\imax{\indexOne}{\indexTwo}$ have their intuitive meaning, $\iminus{\indexOne}{\indexTwo}$ denotes \textit{natural} subtraction, and $\imaximum{\ivarOne}{\indexOne}{\indexTwo}$ is the maximum for $\ivarOne$ going from $0$ (included) to $\indexOne$ (excluded) of $\indexTwo$, where $\ivarOne$ can occur free in $\indexTwo$.

\begin{figure}[!ht]
	\fbox{\parbox{.98\textwidth}{\centering
	\begin{align*}
		\interpret{\wfjudgment{\icontextOne}{\indexOne}} :&\ \mathbb{N}^{|\icontextOne|} \to \mathbb{N} \\
		\interpret{\wfjudgment{\icontextOne}{n}} \left(n_1,\dots,n_{|\icontextOne|}\right) =&\ n \\
		\interpret{\wfjudgment{\icontextOne,\ivarOne}{\ivarOne}} \left(n_1,\dots,n_{|\icontextOne|},n_i\right) =&\ n_\ivarOne \\
		\interpret{\wfjudgment{\icontextOne}{\iplus{\indexOne}{\indexTwo}}} \left(n_1,\dots,n_{|\icontextOne|}\right) =&\ \interpret{\wfjudgment{\icontextOne}{\indexOne}}\left(n_1,\dots,n_{|\icontextOne|}\right) + \interpret{\wfjudgment{\icontextOne}{\indexTwo}}\left(n_1,\dots,n_{|\icontextOne|}\right) \\
		\interpret{\wfjudgment{\icontextOne}{\imult{\indexOne}{\indexTwo}}}  \left(n_1,\dots,n_{|\icontextOne|}\right) =&\ \interpret{\wfjudgment{\icontextOne}{\indexOne}}\left(n_1,\dots,n_{|\icontextOne|}\right)\interpret{\wfjudgment{\icontextOne}{\indexTwo}}\left(n_1,\dots,n_{|\icontextOne|}\right) \\
		\interpret{\wfjudgment{\icontextOne}{\iminus{\indexOne}{\indexTwo}}}  \left(n_1,\dots,n_{|\icontextOne|}\right) =&\ \max\left(0,\interpret{\wfjudgment{\icontextOne}{\indexOne}}\left(n_1,\dots,n_{|\icontextOne|}\right) - \interpret{\wfjudgment{\icontextOne}{\indexTwo}}\left(n_1,\dots,n_{|\icontextOne|}\right)\right) \\
		\interpret{\wfjudgment{\icontextOne}{\imax{\indexOne}{\indexTwo}}}  \left(n_1,\dots,n_{|\icontextOne|}\right) =&\ \max\left(\interpret{\wfjudgment{\icontextOne}{\indexOne}}\left(n_1,\dots,n_{|\icontextOne|}\right), \interpret{\wfjudgment{\icontextOne}{\indexTwo}}\left(n_1,\dots,n_{|\icontextOne|}\right)\right) \\
		\interpret{\wfjudgment{\icontextOne}{\imaximum{\ivarOne}{\indexOne}{\indexTwo}}} \left(n_1,\dots,n_{|\icontextOne|}\right) =&\ \max_{n\in\{0,\dots,\interpret{\wfjudgment{\icontextOne}{\indexOne}}\left(n_1,\dots,n_{|\icontextOne|}\right)-1\}} \interpret{\wfjudgment{\icontextOne,\ivarOne}{\indexTwo}}\left(n_1,\dots,n_{|\icontextOne|},n\right)
	\end{align*}
	}}
	\caption{Interpretation of well-formed indices.}
	\label{fig:index-semantics}
\end{figure}

Let $\icontextOne$ be a set of index variable names, which we call an \textit{index context}. An index $\indexOne$ is \emph{well-formed under context $\icontextOne$}, and we write $\wfjudgment{\icontextOne}{\indexOne}$, when all its free index variables are in $\icontextOne$. Figure \ref{fig:index-semantics} provides a more formal interpretation of well-formed indices.

While the index in a circuit type denotes an upper bound, the index in a type of the form $\listt{\indexOne}{\typeOne}$ denotes the \textit{exact} length of the lists of that type. While this refinement is quite restrictive in a generic scenario, it allows us to include lists of labels among wire bundles, since they are effectively isomorphic to finite tensors of labels and therefore represent wire bundles of known size.
Lastly, as we anticipated in Section \ref{sec:overview}, an arrow type $\arrowt{\typeOne}{\typeTwo}{\indexOne}{\indexTwo}$ is annotated with \textit{two} indices: $\indexOne$ is an upper bound to the width of the circuit built by the function once it is applied to an argument of type $\typeOne$, while $\indexTwo$ describes the exact number of wire resources captured in the function's closure. The utility of this last annotation will be clearer in Section \ref{sec:pqr-type-system}.

The languages for terms and values are almost the same as in \PQ, with the minor difference that the $\foldoperator$ operator now binds the index variable name $\ivarOne$ within the scope of its first argument. This variable appears locally in the type of the step function, in such a way as to allow each iteration of the fold to contribute to the overall circuit width in a \textit{different} way.

\subsection{A Formal Language for Circuits}
\label{sec:a-formal-language-for-circuits}

The type system of \PQR\ is designed to reason about the width of circuits. Therefore, before we formally introduce the type system in Section \ref{sec:pqr-type-system}, we ought to introduce circuits themselves in a formal way. So far, we have only spoken of circuits at a very high and intuitive level, and we have represented them only graphically. Looking at the circuits in Section \ref{sec:overview}, what do they have in common? At the fundamental level, they are made up of elementary operations applied to specific wires. Of course, the order of these operations matters, as does the order of wires that they are applied to (e.g. a controlled not operation does not have the same semantics if we switch the target and control qubits).

In the existing literature on \PQ, circuits are usually interpreted as morphisms in a symmetric monoidal category \cite{proto-quipper-m}, but this approach makes it particularly hard to reason about their intensional properties, such as width. For this reason, we opt for a \textit{concrete} model of wires and circuits, rather than an abstract one.

Luckily, we already have a datatype modeling ordered structures of wires, that is, the wire bundles that we introduced in the previous sections. We use them as the foundation upon which we build circuits.

\begin{figure}[!ht]
	\centering
	\fbox{\parbox{.98\textwidth}{\centering
		\begin{tabular}{l l r l}
			Wire bundles	& $\mvalset$	& $\struct\labOne,\struct\labTwo$ & $
			::= \unitv
			\mid \labOne
			\mid \tuple{\struct\labOne}{\struct\labTwo}
			\mid \nil
			\mid \cons{\struct\labOne}{\struct\labTwo},$\\
			Bundle types		& $\mtypeset$	& $\mtypeOne,\mtypeTwo$ & $
			::= \unitt \mid
			\wtypeOne
			\mid \tensor{\mtypeOne}{\mtypeTwo}
			\mid \listt{\indexOne}{\mtypeOne},$\\
			&&&\\
			Circuits	& $\circset$	& $\circuitOne,\circuitTwo$ & $ ::= \cidentity{\lcOne}
			\mid \circuitOne;\gateapp{\gateOne}{\struct\labOne}{\struct\labTwo}.$
			\end{tabular}
		}}
	\caption{\crl\ syntax and types.}
	\label{fig:crl-syntax}
\end{figure}

That being said, Figure \ref{fig:crl-syntax} introduces the \textsf{Circuit Representation Language} (\crl) which we use as the target for circuit building in \PQR. Wire bundles are exactly as in Figure \ref{fig:pqr-syntax} and represent arbitrary structures of wires, while circuits themselves are defined very simply as a sequence of elementary operations applied to said structures. We call $\lcOne$ a \textit{label context} and define it as a partial mapping from label names to wire types. We use label contexts as a mean to keep track of the set of labels available at any point during a computation, alongside their respective types. Circuit $\cidentity{\lcOne}$ represents the identity circuit taking as input the labels in $\lcOne$ and returning them unchanged, while $\circuitOne;\gateapp{\gateOne}{\struct\labOne}{\struct\labTwo}$ represents the application of the elementary operation $\gateOne$ to the wires identified by $\struct\labOne$ among the outputs of $\circuitOne$. Operation $\gateOne$ outputs the wire bundle $\struct\labTwo$, whose labels become part of the outputs of the circuit as a whole. Note that an ``elementary operation'' is usually the application of a gate, but it could also be a measurement, or the initialization or discarding of a wire. Although semantically very different, from the perspective of circuit building these operations are just elementary building blocks in the construction of a more complex structure, and it makes no sense to distinguish between them syntactically.
Circuits are amenable to a form of concatenation, defined as follows.

\begin{definition}[Circuit Concatenation]
	We define the \emph{concatenation of \crl\ circuits $\circuitOne$ and $\circuitTwo$}, written $\concat{\circuitOne}{\circuitTwo}$, as follows
	\begin{align}
		\concat{\circuitOne}{\cidentity{\lcOne}} &= \circuitOne \\
		\concat{\circuitOne}{(\circuitTwo;\gateapp{\gateOne}{\struct{\labOne}}{\struct{\labTwo}})} &= (\concat{\circuitOne}{\circuitTwo});\gateapp{\gateOne}{\struct{\labOne}}{\struct{\labTwo}}
	\end{align}
\end{definition}

\subsubsection{Circuit Typing}
Naturally, not all circuits built from the \crl\ grammar make sense. For example $\cidentity{(\labOne:\qubitt)};\gateapp{H}{\labTwo}{\labTwo}$ and $\cidentity{(\labOne:\qubitt)};\gateapp{\mathit{CNOT}}{\tuple{\labOne}{\labOne}}{\tuple{\labTwo}{\labThree}}$ are both syntactically correct, but the first applies a gate to a non-existing wire, while the second violates the no-cloning theorem by duplicating $\labOne$. To rule out such ill-formed circuits, we employ a rudimentary type system for circuits which allows us to derive judgments of the form $\circjudgment{\circuitOne}{\lcOne}{\lcTwo}$, which informally read ``circuit $\circuitOne$ is well-typed with input label context $\lcOne$ and output label context $\lcTwo$''.

\begin{figure}[!ht]
	\centering
	\fbox{\begin{mathpar}
			\inference[\textit{unit}]{}
			{\mjudgment{\emptycontext}{\unitv}{\unitt}}
			\and
			\inference[\textit{lab}]{}
			{\mjudgment{\labOne:\wtypeOne}{\labOne}{\wtypeOne}}
			\and
			\inference[\textit{nil}]
			{\eqjudgment{}{\indexOne}{0}}
			{\mjudgment{\emptycontext}{\nil}{\listt{\indexOne}{\mtypeOne}}}
			\and
			\inference[\textit{pair}]
			{\mjudgment{\lcOne_1}{\struct\labOne}{\mtypeOne}
				&
				\mjudgment{\lcOne_2}{\struct\labTwo}{\mtypeTwo}}
			{\mjudgment{\lcOne_1,\lcOne_2}{\tuple{\struct\labOne}{\struct\labTwo}}{\tensor{\mtypeOne}{\mtypeTwo}}}
			\and
			\inference[\textit{cons}]
			{\mjudgment{\lcOne_1}{\struct\labOne}{\mtypeOne}
				&
				\mjudgment{\lcOne_2}{\struct{\labTwo}}{\listt{\indexTwo}{\mtypeOne}}
				&
				\eqjudgment{}{\indexOne}{\iplus{\indexTwo}{1}}}
			{\mjudgment{\lcOne_1,\lcOne_2}{\cons{\struct\labOne}{\struct\labTwo}}{\listt{\indexOne}{\mtypeOne}}}
			\\\\
			\inference[\textit{id}]{}
			{\circjudgment{\cidentity{\lcOne}}{\lcOne}{\lcOne}}
			\and
			\inference[\textit{seq}]
			{\circjudgment{\circuitOne}{\lcOne}{\lcTwo,\lcThree}
				\quad
				\mjudgment{\lcThree}{\struct\labOne}{\mtypeOne}
				\quad
				\mjudgment{\lcFour}{\struct\labTwo}{\mtypeTwo}
				\quad
				\gateOne\in\gatesetoftype{\mtypeOne}{\mtypeTwo}}
			{\circjudgment{\circuitOne;\gateapp{\gateOne}{\struct\labOne}{\struct\labTwo}}{\lcOne}{\lcTwo,\lcFour}}
	\end{mathpar}}
	\caption{\crl\ type system.}
	\label{fig:crl-type-system}
\end{figure}

The typing rules for \crl\ are given in Figure \ref{fig:crl-type-system}. We call $\mjudgment{\lcOne}{\struct\labOne}{\mtypeOne}$ a \textit{wire judgment}, and we use it to give a structured type $\mtypeOne$ to an otherwise unordered label context $\lcOne$, by means of a wire bundle $\struct{\labOne}$. Most rules are straightforward, except those for lists, which rely on a judgment of the form $\eqjudgment{}{\indexOne}{\indexTwo}$. This is to be intended as a semantic judgment asserting that $\indexOne$ and $\indexTwo$ are closed and equal when interpreted as natural numbers. Within the typing rules for lists, this judgment reflects the idea that there are many ways to syntactically represent the length of a list. For example, $\nil$ can be given type $\listt{0}{\mtypeOne}$, but also $\listt{\iminus{1}{1}}{\mtypeOne}$ or $\listt{\imult{0}{5}}{\mtypeOne}$. This kind of flexibility might seem unwarranted for such a simple language, but it is useful to effectively interface \crl\ and the more complex \PQR. Speaking of the actual circuit judgments, the \textit{seq} rule tells us that the the application of an elementary operation $\gateOne$ is well-typed whenever $\gateOne$ only acts on labels occurring in the outputs of $\circuitOne$ (those in $\struct\labOne$, or equivalently in $\lcThree$), produces in output labels that do not clash with the remaining outputs of $\circuitOne$ (since $\lcTwo,\lcFour$ denotes the union of two label contexts with disjoint domains) and has the right signature. This last requirement is expressed as $\gateOne\in\gateset(\mtypeOne,\mtypeTwo)$, where $\gateset(\mtypeOne,\mtypeTwo)$ is the subset of elementary operations that can be applied to an input of type $\mtypeOne$ to obtain an output of type $\mtypeTwo$. For example, the Hadamard gate, which acts on a single qubit, is in $\gateset(\qubitt,\qubitt)$, the controlled not gate is in $\gateset(\tensor{\qubitt}{\qubitt},\tensor{\qubitt}{\qubitt})$ and the single-qubit measurement is in $\gateset(\qubitt,\bitt)$.

\subsubsection{Circuit Width}
Among the many properties of circuits, we are interested in width, so we conclude this section by giving a formal status to this quantity. As we saw in Section \ref{sec:overview}, when we initialize a new wire, we can reuse previously discarded wires in such a way that the width of a circuit is not always equal to the number of wires that are initialized. We formalize this intuition in the following definition.

\begin{definition}[Circuit Width]
	We define the \emph{width of a \crl\ circuit $\circuitOne$}, written $\width{\circuitOne}$, as follows
	\begin{align}
		\width{\cidentity{\lcOne}} &= |\lcOne|\\
		\width{\circuitOne;\gateapp{\gateOne}{\struct\labOne}{\struct\labTwo}} &= \width{\circuitOne} + \max(0, \inits{\gateOne} - \discarded{\circuitOne})
	\end{align}
	where $\discarded{\circuitOne} = \width{\circuitOne} - \operatorname{outputs}(\circuitOne)$ and
	\begin{align}
		\operatorname{outputs}(\cidentity{\lcOne}) &= |\lcOne|\\
		\operatorname{outputs}(\circuitOne;\gateapp{\gateOne}{\struct{\labOne}}{\struct{\labTwo}}) &= \operatorname{outputs}(\circuitOne) + \inits{\gateOne}
	\end{align}
\end{definition}
In the definition above, $|\lcOne|$ is the number of labels in $\lcOne$ and $\inits{\gateOne}$ represents the net number of new wires initialized by $\gateOne$. If $\gateOne$ outputs less wires than it consumes, then $\inits{\gateOne}$ is negative. The idea is that whenever we require a new wire in our computation, first we try to reuse a previously discarded wire, in which case the initialization does not add to the total width of the circuit ($\inits{\gateOne} \leq \discarded{\circuitOne}$), and \textit{only if we cannot do so} we actually create a new wire, increasing the overall width of the circuit ($\inits{\gateOne} > \discarded{\circuitOne}$).

Now that we have a formal definition of circuit types and width, we can state a fundamental property of the concatenation of well-typed circuits, which is illustrated in Figure \ref{fig:composition} and proven in Theorem \ref{thm:crl}. We use this result pervasively in proving the correctness of \PQR\ in section \ref{sec:type-safety-and-correctness}.

\begin{theorem}[CRL]
	Given $\circjudgment{\circuitOne}{\lcOne}{\lcTwo,\lcThree}$ and $\circjudgment{\circuitTwo}{\lcThree}{\lcFour}$ such that the labels shared by $\circuitOne$ and $\circuitTwo$ are all and only those in $\lcThree$, we have
	\begin{enumerate}
		\item $\circjudgment{\concat{\circuitOne}{\circuitTwo}}{\lcOne}{\lcTwo,\lcFour},$
		\item $\width{\concat{\circuitOne}{\circuitTwo}} \leq \maxf{\width{\circuitOne}}{\width{\circuitTwo} + |\lcTwo|}.$
	\end{enumerate}
	\label{thm:crl}
\end{theorem}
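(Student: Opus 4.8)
The plan is to proceed by induction on the structure of the second circuit $\circuitTwo$, following the recursive definition of concatenation. This is natural because $\concat{\circuitOne}{\circuitTwo}$ is defined by recursion on $\circuitTwo$, and both $\width{-}$ and $\operatorname{outputs}(-)$ are also defined by recursion on their last gate. I expect to prove the two statements simultaneously, since the width bound in part 2 will need the typing information from part 1 at each inductive step (in particular, to know which label context is the output of the intermediate circuit so that the \textit{seq} rule applies).

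For the base case $\circuitTwo = \cidentity{\lcThree}$: concatenation gives $\concat{\circuitOne}{\cidentity{\lcThree}} = \circuitOne$, and from $\circjudgment{\cidentity{\lcThree}}{\lcThree}{\lcFour}$ via rule \textit{id} we get $\lcFour = \lcThree$, so part 1 is just the hypothesis $\circjudgment{\circuitOne}{\lcOne}{\lcTwo,\lcThree}$. For part 2 we must show $\width{\circuitOne} \leq \maxf{\width{\circuitOne}}{|\lcThree| + |\lcTwo|}$, which is immediate since the left-hand side is one of the two arguments of the max. For the inductive step $\circuitTwo = \circuitTwo';\gateapp{\gateOne}{\struct\labOne}{\struct\labTwo}$: inverting the \textit{seq} rule on $\circjudgment{\circuitTwo}{\lcThree}{\lcFour}$ gives $\circjudgment{\circuitTwo'}{\lcThree}{\lcFour',\lcFive}$ with $\mjudgment{\lcFive}{\struct\labOne}{\mtypeOne}$, $\mjudgment{\lcFour''}{\struct\labTwo}{\mtypeTwo}$, $\gateOne\in\gatesetoftype{\mtypeOne}{\mtypeTwo}$ and $\lcFour = \lcFour',\lcFour''$. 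The induction hypothesis applied to $\circuitOne$ and $\circuitTwo'$ gives $\circjudgment{\concat{\circuitOne}{\circuitTwo'}}{\lcOne}{\lcTwo,\lcFour',\lcFive}$; since $\struct\labOne$ still types $\lcFive$ among these outputs and $\struct\labTwo$, $\lcFour''$ are fresh, we can reapply \textit{seq} to conclude $\circjudgment{(\concat{\circuitOne}{\circuitTwo'});\gateapp{\gateOne}{\struct\labOne}{\struct\labTwo}}{\lcOne}{\lcTwo,\lcFour',\lcFour''}$, which by the definition of concatenation is exactly part 1.

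For part 2 in the inductive step, the key is to unfold the width recursion on both sides and use the inductive width bound together with a careful bookkeeping of the $\operatorname{outputs}$ and $\discarded{-}$ quantities. Write $\widthOne' = \width{\concat{\circuitOne}{\circuitTwo'}}$, and note that by definition $\width{\concat{\circuitOne}{\circuitTwo}} = \widthOne' + \max(0,\inits{\gateOne} - \discarded{\concat{\circuitOne}{\circuitTwo'}})$, while $\width{\circuitTwo} = \width{\circuitTwo'} + \max(0,\inits{\gateOne} - \discarded{\circuitTwo'})$. The main obstacle will be controlling $\discarded{\concat{\circuitOne}{\circuitTwo'}}$ in terms of $\discarded{\circuitTwo'}$: intuitively, the concatenated circuit has at least as many discarded (reusable) wires as $\circuitTwo'$ alone, because the "extra" wires of $\circuitOne$ that are not passed into $\circuitTwo$ — namely those counted by $|\lcTwo|$ — are still around and available for reuse. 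I would establish an auxiliary inequality of the shape $\discarded{\concat{\circuitOne}{\circuitTwo'}} \geq \discarded{\circuitTwo'}$, or more precisely relate $\operatorname{outputs}(\concat{\circuitOne}{\circuitTwo'})$ to $\operatorname{outputs}(\circuitTwo')$ and $|\lcTwo|$ (from part 1 of the induction hypothesis, the outputs of $\concat{\circuitOne}{\circuitTwo'}$ are those of $\lcTwo,\lcFour',\lcFive$, so $\operatorname{outputs}(\concat{\circuitOne}{\circuitTwo'}) = |\lcTwo| + \operatorname{outputs}(\circuitTwo') - |\lcThree|$ after accounting for the common input $\lcThree$; one can also track this directly by induction on $\circuitTwo'$). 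Combining this with the elementary fact that $\max(0,a-d') \leq \max(0,a-d)$ whenever $d' \geq d$, and then folding the inductive bound $\widthOne' \leq \maxf{\width{\circuitOne}}{\width{\circuitTwo'} + |\lcTwo|}$ back in, yields $\width{\concat{\circuitOne}{\circuitTwo}} \leq \maxf{\width{\circuitOne}}{\width{\circuitTwo} + |\lcTwo|}$ after a short case analysis on which argument of the outer max dominates. The arithmetic is routine once the $\discarded{-}$ comparison is in place; that comparison — essentially the statement that sequential composition never loses an opportunity for wire recycling — is the crux of the argument.
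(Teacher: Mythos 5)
Your proposal is correct and takes essentially the same route as the paper, whose entire proof is the one-line ``by induction on the derivation of $\circjudgment{\circuitTwo}{\lcThree}{\lcFour}$'' (equivalent to your structural induction on $\circuitTwo$, proving both parts simultaneously). One small correction for the write-up: $\operatorname{outputs}(\concat{\circuitOne}{\circuitTwo'}) = |\lcTwo| + \operatorname{outputs}(\circuitTwo')$ with no extra $-|\lcThree|$ term, since $\operatorname{outputs}(\circuitOne) = |\lcTwo| + |\lcThree|$ already accounts for the shared interface; with that identity the inductive step for part 2 goes through exactly as you outline.
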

\begin{proof}
	By induction of the derivation of $\circjudgment{\circuitTwo}{\lcThree}{\lcFour}$.
\end{proof}

\begin{figure}[!ht]
	\centering
	\fbox{\parbox{.98\textwidth}{\centering
	\begin{quantikz}[row sep=1em]
		\lstick{$\lcOne$} & \qwbundle{|\lcOne|} & \gate[2]{\circuitOne} &  \ \push{\lcTwo} \ & \qwbundle{|\lcTwo|} & & & \rstick{$\lcTwo$}\\
		\setwiretype{n} & & & \setwiretype{q} \ \push{\lcThree} \  & \qwbundle{|\lcThree|} & \gate{\circuitTwo} & \qwbundle{|\lcFour|} &\rstick{$\lcFour$}
\end{quantikz}
}}
\caption{The concatenation of well-typed circuits $\circuitOne$ and $\circuitTwo$.}
\label{fig:composition}
\end{figure}

\subsection{Typing Programs}
\label{sec:pqr-type-system}

Going back to \PQR, we have already seen how the standard \PQ\ types are refined with quantitative information. However, decorating types is not enough for the purposes of width estimation. Recall that, in general, a \PQ\ program produces a circuit as a \textit{side effect} of its evaluation. If we want to reason about the width of said circuit, it is not enough to rely on a regular linear type system, although dependent. Rather, we have to introduce the second ingredient of our analysis and turn to a \textit{type-and-effect system} \cite{types-and-effects}, revolving around a type judgment of the form
\begin{equation}
	\cjudgment{\icontextOne}{\contextOne}{\lcOne}{\termOne}{\typeOne}{\indexOne},
\end{equation}
which intuitively reads ``for all values of the index variables in $\icontextOne$, under typing context $\contextOne$ and label context $\lcOne$, term $\termOne$ has type $\typeOne$ and produces a circuit of width at most $\indexOne$''. Therefore, the index variables in $\icontextOne$ are universally quantified in the rest of the judgment. Context $\contextOne$ is a typing context for parameter and linear variables alike. When a typing context contains exclusively parameter variables, we write it as $\pcontextOne$. In this judgment, index $\indexOne$ plays the role of an \textit{effect annotation}, describing a relevant aspect of the side effect produced by the evaluation of $\termOne$ (i.e. the width of the produced circuit). The attentive reader might wonder why this annotation consists only of one index, whereas when we discussed arrow types in previous sections we needed two. The reason is that the second index, which we use to keep track of the number of wires captured by a function, is redundant in a typing judgment where the same quantity can be inferred directly from the environments $\contextOne$ and $\lcOne$. A similar typing judgment is introduced for values, which are effect-less:
\begin{equation}
	\vjudgment{\icontextOne}{\contextOne}{\lcOne}{\valOne}{\typeOne}.
\end{equation}

\begin{figure}[p]
	\centering
	\fbox{\begin{mathpar}
			\inference[\textit{unit}]
			{\wfjudgment{\icontextOne}{\pcontextOne}}
			{\vjudgment{\icontextOne}{\pcontextOne}{\emptylc}{\unitv}{\unitt}}
			\and
			\inference[\textit{lab}]
			{\wfjudgment{\icontextOne}{\pcontextOne}}
			{\vjudgment{\icontextOne}{\pcontextOne}{\labOne:\wtypeOne}{\labOne}{\wtypeOne}}
			\and
			\inference[\textit{var}]
			{\wfjudgment{\icontextOne}{\pcontextOne,\varOne:\typeOne}}
			{\vjudgment{\icontextOne}{\pcontextOne,\varOne:\typeOne}{\emptylc}{\varOne}{\typeOne}}
			\and
			\inference[\textit{abs}]
			{\cjudgment{\icontextOne}{\contextOne,\varOne:\typeOne}{\lcOne}{\termOne}{\typeTwo}{\indexOne}}
			{\vjudgment{\icontextOne}{\contextOne}{\lcOne}{\abs{\varOne}{\typeOne}{\termOne}}{\arrowt{\typeOne}{\typeTwo}{\indexOne}{\rcount{\contextOne;\lcOne}}}}
			\and
			\inference[\textit{app}]
			{\vjudgment{\icontextOne}{\pcontextOne,\contextOne_1}{\lcOne_1}{\valOne}{\arrowt{\typeOne}{\typeTwo}{\indexOne}{\indexTwo}}
				&
				\vjudgment{\icontextOne}{\pcontextOne,\contextOne_2}{\lcOne_2}{\valTwo}{\typeOne}}
			{\cjudgment{\icontextOne}{\pcontextOne,\contextOne_1,\contextOne_2}{\lcOne_1,\lcOne_2}{\app{\valOne}{\valTwo}}{\typeTwo}{\indexOne}}
			\and
			\inference[\textit{lift}]
			{\cjudgment{\icontextOne}{\pcontextOne}{\emptycontext}{\termOne}{\typeOne}{0}}
			{\vjudgment{\icontextOne}{\pcontextOne}{\emptycontext}{\lift{\termOne}}{\bang{\typeOne}}}
			\and
			\inference[\textit{force}]
			{\vjudgment{\icontextOne}{\pcontextOne}{\emptycontext}{\valOne}{\bang{\typeOne}}}
			{\cjudgment{\icontextOne}{\pcontextOne}{\emptycontext}{\force{\valOne}}{\typeOne}{0}}
			\and
			\inference[\textit{circ}]
			{\circjudgment{\circuitOne}{\lcOne}{\lcTwo}
				&
				\mjudgment{\lcOne}{\struct\labOne}{\mtypeOne}
				&
				\mjudgment{\lcTwo}{\struct\labTwo}{\mtypeTwo}
				&
				\leqjudgment{\icontextOne}{\width{\circuitOne}}{\indexOne}
				&
				\wfjudgment{\icontextOne}{\pcontextOne}}
			{\vjudgment{\icontextOne}{\pcontextOne}{\emptycontext}{\boxedCirc{\struct\labOne}{\circuitOne}{\struct\labTwo}}{\circt{\indexOne}{\mtypeOne}{\mtypeTwo}}}
			\and
			\inference[\textit{apply}]
			{\vjudgment{\icontextOne}{\pcontextOne,\contextOne_1}{\lcOne_1}{\valOne}{\circt{\indexOne}{\mtypeOne}{\mtypeTwo}}
				&
				\vjudgment{\icontextOne}{\pcontextOne,\contextOne_2}{\lcOne_2}{\valTwo}{\mtypeOne}}
			{\cjudgment{\icontextOne}{\pcontextOne,\contextOne_1,\contextOne_2}{\lcOne_1,\lcOne_2}{\apply{\valOne}{\valTwo}}{\mtypeTwo}{\indexOne}}
			\and
			\inference[\textit{box}]
			{\vjudgment{\icontextOne}{\pcontextOne}{\emptycontext}{\valOne}{\bang({\arrowt{\mtypeOne}{\mtypeTwo}{\indexOne}{\indexTwo}})}}
			{\cjudgment{\icontextOne}{\pcontextOne}{\emptycontext}{\boxt{\mtypeOne}{\valOne}}{\circt{\indexOne}{\mtypeOne}{\mtypeTwo}}{0}}
			\and
			\inference[\textit{nil}]
			{\wfjudgment{\icontextOne}{\pcontextOne}
				&
				\wfjudgment{\icontextOne}{\typeOne}}
			{\vjudgment{\icontextOne}{\pcontextOne}{\emptycontext}{\nil}{\listt{0}{\typeOne}}}
			\and
			\inference[\textit{cons}]
			{\vjudgment{\icontextOne}{\pcontextOne,\contextOne_1}{\lcOne_1}{\valOne}{\typeOne}
				&
				\vjudgment{\icontextOne}{\pcontextOne,\contextOne_2}{\lcOne_2}{\valTwo}{\listt{\indexOne}{\typeOne}}}
			{\vjudgment{\icontextOne}{\pcontextOne,\contextOne_1,\contextOne_2}{\lcOne_1\lcOne_2}{\cons{\valOne}{\valTwo}}{\listt{\iplus{\indexOne}{1}}{\typeOne}}}
			\and
			\inference[\textit{fold}]
			{\vjudgment{\icontextOne}{\pcontextOne,\contextOne}{\lcOne}{\valTwo}{\typeTwo\isub{0}{\ivarOne}}
				&
				\vjudgment{\icontextOne,\ivarOne}{\pcontextOne}{\emptycontext}{\valOne}{\bang{(\arrowt{(\tensor{\typeTwo}{\typeOne})}{\typeTwo\isub{\iplus{\ivarOne}{1}}{\ivarOne}}{\indexTwo}{\indexTwo'})}}
				\\
				\wfjudgment{\icontextOne}{\indexOne}
				&
				\wfjudgment{\icontextOne}{\typeOne}
				&
				\indexThree = \imax{\rcount{\contextOne;\lcOne}}{\imaximum{\ivarOne}{\indexOne}{\iplus{\indexTwo}{\imult{(\iminus{\iminus{\indexOne}{1}}{\ivarOne})}{\rcount{\typeOne}}}}}}
			{\vjudgment{\icontextOne}{\pcontextOne,\contextOne}{\lcOne}{\fold{\ivarOne}{\valOne}{\valTwo}}{\arrowt{\listt{\indexOne}{\typeOne}}{\typeTwo\isub{\indexOne}{\ivarOne}}{\indexThree}{\rcount{\contextOne;\lcOne}}}}
			\and
			\inference[\textit{dest}]
			{\vjudgment{\icontextOne}{\pcontextOne,\contextOne_1}{\lcOne_1}{\valOne}{\tensor{\typeOne}{\typeTwo}}
				&
				\cjudgment{\icontextOne}{\pcontextOne,\contextOne_2,\varOne:\typeOne,\varTwo:\typeTwo}{\lcOne_2}{\termOne}{\typeThree}{\indexOne}}
			{\cjudgment{\icontextOne}{\pcontextOne,\contextOne_1,\contextOne_2}{\lcOne_1,\lcOne_2}{\dest{\varOne}{\varTwo}{\valOne}{\termOne}}{\typeThree}{\indexOne}}
			\and
			\inference[\textit{pair}]
			{\vjudgment{\icontextOne}{\pcontextOne,\contextOne_1}{\lcOne_1}{\valOne}{\typeOne}
				&
				\vjudgment{\icontextOne}{\pcontextOne,\contextOne_2}{\lcOne_2}{\valTwo}{\typeTwo}}
			{\vjudgment{\icontextOne}{\pcontextOne,\contextOne_1,\contextOne_2}{\lcOne_1,\lcOne_2}{\tuple{\valOne}{\valTwo}}{\tensor{\typeOne}{\typeTwo}}}
			\and
			\inference[\textit{return}]
			{\vjudgment{\icontextOne}{\contextOne}{\lcOne}{\valOne}{\typeOne}}
			{\cjudgment{\icontextOne}{\contextOne}{\lcOne}{\return{\valOne}}{\typeOne}{\rcount{\contextOne;\lcOne}}}
			\and
			\inference[\textit{let}]
			{\cjudgment{\icontextOne}{\pcontextOne,\contextOne_1}{\lcOne_1}{\termOne}{\typeOne}{\indexOne}
				&
				\cjudgment{\icontextOne}{\pcontextOne,\contextOne_2,\varOne:\typeOne}{\lcOne_2}{\termTwo}{\typeTwo}{\indexTwo}}
			{\cjudgment{\icontextOne}{\pcontextOne,\contextOne_1,\contextOne_2}{\lcOne_1,\lcOne_2}{\letin{\varOne}{\termOne}{\termTwo}}{\typeTwo}{\imax{\iplus{\indexOne}{\rcount{\contextOne_2;\lcOne_2}}}{\indexTwo}}}
			\and
			\inference[\textit{vsub}]
			{\vjudgment{\icontextOne}{\contextOne}{\lcOne}{\valOne}{\typeOne}
				&
				\subtypejudgment{\icontextOne}{\typeOne}{\typeTwo}}
			{\vjudgment{\icontextOne}{\contextOne}{\lcOne}{\valOne}{\typeTwo}}
			\and
			\inference[\textit{csub}]
			{\cjudgment{\icontextOne}{\contextOne}{\lcOne}{\termOne}{\typeOne}{\indexOne}
				&
				\subtypejudgment{\icontextOne}{\typeOne}{\typeTwo}
				&
				\leqjudgment{\icontextOne}{\indexOne}{\indexTwo}}
			{\cjudgment{\icontextOne}{\contextOne}{\lcOne}{\termOne}{\typeTwo}{\indexTwo}}
	\end{mathpar}}
	\caption{\PQR\ type system.}
	\label{fig:typing-rules}
\end{figure}

\begin{figure}[!ht]
	\centering
	\fbox{\begin{mathpar}
			\inference[ivar]
			{ }
			{\wfjudgment{\icontextOne,\ivarOne}{\ivarOne}}
			\and
			\inference[nat]
			{ }
			{\wfjudgment{\icontextOne}{\natOne}}
			\and
			\inference[plus]
			{\wfjudgment{\icontextOne}{\indexOne}
				&
				\wfjudgment{\icontextOne}{\indexTwo}}
			{\wfjudgment{\icontextOne}{\iplus{\indexOne}{\indexTwo}}}
			\and
			\inference[minus]
			{\wfjudgment{\icontextOne}{\indexOne}
				&
				\wfjudgment{\icontextOne}{\indexTwo}}
			{\wfjudgment{\icontextOne}{\iminus{\indexOne}{\indexTwo}}}
			\and
			\inference[mult]
			{\wfjudgment{\icontextOne}{\indexOne}
				&
				\wfjudgment{\icontextOne}{\indexTwo}}
			{\wfjudgment{\icontextOne}{\imult{\indexOne}{\indexTwo}}}
			\and
			\inference[max]
			{\wfjudgment{\icontextOne}{\indexOne}
				&
				\wfjudgment{\icontextOne}{\indexTwo}}
			{\wfjudgment{\icontextOne}{\imax{\indexOne}{\indexTwo}}}
			\and
			\inference[maximum]
			{\wfjudgment{\icontextOne}{\indexOne}
				&
				\wfjudgment{\icontextOne,\ivarOne}{\indexTwo}}
			{\wfjudgment{\icontextOne}{\imaximum{\ivarOne}{\indexOne}{\indexTwo}}}
			\\\\
			\inference[unit]
			{}
			{\wfjudgment{\icontextOne}{\unitt}}
			\and
			\inference[wire]
			{}
			{\wfjudgment{\icontextOne}{\wtypeOne}}
			\and
			\inference[bang]
			{\wfjudgment{\icontextOne}{\typeOne}}
			{\wfjudgment{\icontextOne}{\bang{\typeOne}}}
			\and
			\inference[tensor]
			{\wfjudgment{\icontextOne}{\typeOne}
				&
				\wfjudgment{\icontextOne}{\typeTwo}}
			{\wfjudgment{\icontextOne}{\tensor{\typeOne}{\typeTwo}}}
			\and
			\inference[arrow]
			{\wfjudgment{\icontextOne}{\typeOne}
				&
				\wfjudgment{\icontextOne}{\typeTwo}
				&
				\wfjudgment{\icontextOne}{\indexOne}
				&
				\wfjudgment{\icontextOne}{\indexTwo}}
			{\wfjudgment{\icontextOne}{\arrowt{\typeOne}{\typeTwo}{\indexOne}{\indexTwo}}}
			\and
			\inference[list]
			{\wfjudgment{\icontextOne}{\typeOne}
				&
				\wfjudgment{\icontextOne}{\indexOne}}
			{\wfjudgment{\icontextOne}{\listt{\indexOne}{\typeOne}}}
			\and
			\inference[circ]
			{\wfjudgment{\icontextOne}{\mtypeOne}
				&
				\wfjudgment{\icontextOne}{\mtypeTwo}
				&
				\wfjudgment{\icontextOne}{\indexOne}}
			{\wfjudgment{\icontextOne}{\circt{\indexOne}{\mtypeOne}{\mtypeTwo}}}
	\end{mathpar}}
\caption{\PQR\ well-formedness rules.}
\label{fig:well-formedness}
\end{figure}

The rules for deriving typing judgments are those in Figure \ref{fig:typing-rules}, where $\contextOne_1,\contextOne_2$ and $\lcOne_1,\lcOne_2$ denote the union of two contexts with disjoint domains. The well-formedness judgment $\wfjudgment{\icontextOne}{\indexOne}$ is extended to types as shown in Figure \ref{fig:well-formedness} and then lifted to typing contexts in the natural way. Among interesting typing rules, we can see how the \textit{circ} rule bridges between \crl\ and \PQR. A boxed circuit $\boxedCirc{\struct\labOne}{\circuitOne}{\struct\labTwo}$ is well typed with type $\circt{\indexOne}{\mtypeOne}{\mtypeTwo}$ when $\circuitOne$ is no wider than the quantity denoted by $\indexOne$, $\circjudgment{\circuitOne}{\lcOne}{\lcTwo}$ and $\struct\labOne,\struct\labTwo$ contain all and only the labels in $\lcOne$ and $\lcTwo$, respectively, acting as a language-level interface to $\circuitOne$.

The two main constructs that interact with circuits are $\applyoperator$ and $\boxoperator$. The \textit{apply} rule is the foremost place where effects enter the type derivation:

\begin{equation*}
\inference[\textit{apply}]
{\vjudgment{\icontextOne}{\pcontextOne,\contextOne_1}{\lcOne_1}{\valOne}{\circt{\indexOne}{\mtypeOne}{\mtypeTwo}}
	&
	\vjudgment{\icontextOne}{\pcontextOne,\contextOne_2}{\lcOne_2}{\valTwo}{\mtypeOne}}
{\cjudgment{\icontextOne}{\pcontextOne,\contextOne_1,\contextOne_2}{\lcOne_1,\lcOne_2}{\apply{\valOne}{\valTwo}}{\mtypeTwo}{\indexOne}}
\end{equation*}

Since $\valOne$ represents some boxed circuit of width at most $\indexOne$, its application to an appropriate wire bundle $\valTwo$ produces exactly a circuit of width at most $\indexOne$.
The \textit{box} rule, on the other hand, works more or less in the opposite direction:

\begin{equation*}
\inference[\textit{box}]
{\vjudgment{\icontextOne}{\pcontextOne}{\emptycontext}{\valOne}{\bang({\arrowt{\mtypeOne}{\mtypeTwo}{\indexOne}{\indexTwo}})}}
{\cjudgment{\icontextOne}{\pcontextOne}{\emptycontext}{\boxt{\mtypeOne}{\valOne}}{\circt{\indexOne}{\mtypeOne}{\mtypeTwo}}{0}}
\end{equation*}

If $\valOne$ is a circuit building function that, once applied to an input of type $\mtypeOne$, would build a circuit of output type $\mtypeTwo$ and width at most $\indexOne$, then boxing it means turning it into a boxed circuit with the same characteristics. Note that the \textit{box} rule requires that the typing context be devoid of linear variables. This reflects the idea that $\valOne$ is meant to be executed in complete isolation, to build a standalone, replicable circuit, and therefore it should not capture any linear resource (e.g. a label) from the surrounding environment.

\subsubsection{Wire Count}
Notice that many rules rely on an operator written $\rcount{\cdot}$, which we call the \textit{wire count} operator. Intuitively, this operator returns the number of wire resources (in our case, bits or qubits) represented by a type or context. To understand how this is important, consider the \textit{return} rule:

\begin{equation*}
\inference[\textit{return}]
{\vjudgment{\icontextOne}{\contextOne}{\lcOne}{\valOne}{\typeOne}}
{\cjudgment{\icontextOne}{\contextOne}{\lcOne}{\return{\valOne}}{\typeOne}{\rcount{\contextOne;\lcOne}}}
\end{equation*}

The $\returnoperator$ operator turns a value $\valOne$ into a trivial computation that evaluates immediately to $\valOne$, and therefore it would be tempting to give it an effect annotation of $0$. However, $\valOne$ is not necessarily a closed value. In fact, it might very well contain many bits and qubits, coming both from the typing context $\contextOne$ and the label context $\lcOne$. Although nothing happens to these bits and qubits, they still corresponds to wires in the underlying circuit, and these wires have a width which must be accounted for in the judgment for the otherwise trivial computation. The \textit{return} rule therefore produces an effect annotation of the form $\rcount{\contextOne;\lcOne}$, which corresponds exactly to this quantity. A formal description of the wire count operator on types is given in the following Definition \ref{def:resource-count}.

\begin{definition}[Wire Count]
	\label{def:resource-count}
	We define the \emph{wire count of a type $\typeOne$}, written $\rcount{\typeOne}$, as a function $\rcount{\cdot}:\typeSet\to\indexSet$ such that
	\begin{mathpar}
		\rcount{\unitt} = \rcount{\bang{\typeOne}} = \rcount{\circt{\indexOne}{\mtypeOne}{\mtypeTwo}} =0
		\and
		\rcount{\wtypeOne} = 1
		\and
		\rcount{\tensor{\typeOne}{\typeTwo}} = \iplus{\rcount{\typeOne}}{\rcount{\typeTwo}}
		\and
		\rcount{\arrowt{\typeOne}{\typeTwo}{\indexOne}{\indexTwo}} = \indexTwo
		\and
		\rcount{\listt{\indexOne}{\typeOne}} = \imult{\indexOne}{\rcount{\typeOne}}
	\end{mathpar}
\end{definition}

This definition is lifted to typing and label contexts in the natural way. Note that, for any label context $\lcOne$, we have $\rcount{\lcOne}=|\lcOne|$. Annotation $\rcount{\contextOne;\lcOne}$ is then shorthand for $\iplus{\rcount{\contextOne}}{|\lcOne|}$. This definition is fairly straightforward, except for the arrow case. By itself, an arrow type does not give us any information about the amount of qubits or bits captured in the corresponding closure. This is precisely where the second index $\indexTwo$, which keeps track exactly of this quantity, comes into play. This annotation is introduced by the \textit{abs} rule and allows our analysis to circumvent data hiding.

The \textit{let} rule is another rule in which wire counts are essential:
\begin{equation*}
\inference[\textit{let}]
{\cjudgment{\icontextOne}{\pcontextOne,\contextOne_1}{\lcOne_1}{\termOne}{\typeOne}{\indexOne}
	&
	\cjudgment{\icontextOne}{\pcontextOne,\contextOne_2,\varOne:\typeOne}{\lcOne_2}{\termTwo}{\typeTwo}{\indexTwo}}
{\cjudgment{\icontextOne}{\pcontextOne,\contextOne_1,\contextOne_2}{\lcOne_1,\lcOne_2}{\letin{\varOne}{\termOne}{\termTwo}}{\typeTwo}{\imax{\iplus{\indexOne}{\rcount{\contextOne_2;\lcOne_2}}}{\indexTwo}}}
\end{equation*}
The two terms $\termOne$ and $\termTwo$ build the circuits $\circuitOne_\termOne$ and $\circuitOne_\termTwo$, whose widths are bounded by $\indexOne$ and $\indexTwo$, respectively. Once again, it might be tempting to conclude that the overall circuit built by the let construct has width bounded by $\imax{\indexOne}{\indexTwo}$, but this fails to take into account the fact that while $\termOne$ is building $\circuitOne_\termOne$ starting from the wires contained in $\contextOne_1$ and $\lcOne_1$, we must keep aside the wires contained in $\contextOne_2$ and $\lcOne_2$, which will be used by $\termTwo$ to build $\circuitOne_\termTwo$. These wires must flow alongside $\circuitOne_\termOne$ and their width, i.e. $\rcount{\contextOne_2;\lcOne_2}$, adds up to the total width of the left-hand side of the $\letoperator$ construct, leading to an overall width upper bound of $\imax{\indexOne+\rcount{\contextOne_2;\lcOne_2}}{\indexTwo}$. This situation is better illustrated in Figure \ref{fig:let-circuit}.

\begin{figure}[!ht]
	\centering
	\fbox{\parbox{.98\textwidth}{\centering
	\begin{quantikz}[row sep=1.5em, column sep=3em]
		\lstick{$\contextOne_1$}& \qwbundle{\rcount{\contextOne_1}} & \gate[2]{\circuitOne_\termOne} & \setwiretype{n}\\
		\lstick{$\lcOne_1$}& \qwbundle{\rcount{\lcOne_1}}& &\ \push{x:\typeOne} \ & \qwbundle{\rcount{\typeOne}} & \gate[3]{\circuitOne_\termTwo}\\
		\lstick{$\contextOne_2$}& \qwbundle{\rcount{\contextOne_2}}&  & & &\\
		\lstick{$\lcOne_2$}& \qwbundle{\rcount{\lcOne_2}}& & &	& & \qwbundle{\rcount{\typeTwo}} &
\end{quantikz}
}}
\caption{The shape of a circuit built by a $\letoperator$ construct.}
\label{fig:let-circuit}
\end{figure}

The last rule that makes substantial use of wire counts is \textit{fold}, arguably the most complex rule in the system:

\begin{equation*}
\inference[\textit{fold}]
{\vjudgment{\icontextOne}{\pcontextOne,\contextOne}{\lcOne}{\valTwo}{\typeTwo\isub{0}{\ivarOne}}
	&
	\vjudgment{\icontextOne,\ivarOne}{\pcontextOne}{\emptycontext}{\valOne}{\bang{(\arrowt{(\tensor{\typeTwo}{\typeOne})}{\typeTwo\isub{\iplus{\ivarOne}{1}}{\ivarOne}}{\indexTwo}{\indexTwo'})}}
	\\
	\wfjudgment{\icontextOne}{\indexOne}
	&
	\wfjudgment{\icontextOne}{\typeOne}
	&
	\indexThree = \imax{\rcount{\contextOne;\lcOne}}{\imaximum{\ivarOne}{\indexOne}{\iplus{\indexTwo}{\imult{(\iminus{\iminus{\indexOne}{1}}{\ivarOne})}{\rcount{\typeOne}}}}}}
{\vjudgment{\icontextOne}{\pcontextOne,\contextOne}{\lcOne}{\fold{\ivarOne}{\valOne}{\valTwo}}{\arrowt{\listt{\indexOne}{\typeOne}}{\typeTwo\isub{\indexOne}{\ivarOne}}{\indexThree}{\rcount{\contextOne;\lcOne}}}}
\end{equation*}

The main ingredient of the fold rule is the bound index variable $\ivarOne$, which occurs in the accumulator type $\typeTwo$ and is used to keep track of the number of steps performed by the fold. Let $(\cdot)\isub{\indexOne}{\ivarOne}$ denote the capture-avoiding substitution of the index term $\indexOne$ for the index variable $\ivarOne$ inside an index, type, context, value or term, not unlike $(\cdot)\sub{\valOne}{\varOne}$ denotes the capture-avoiding substitution of the value $\valOne$ for the variable $\varOne$. Intuitively, if the accumulator has initially type $\typeTwo\isub{0}{\ivarOne}$ and each application of the step function increases $\ivarOne$ by one, then when we fold over a list of length $\indexOne$ we get an output of type $\typeTwo\isub{\indexOne}{\ivarOne}$. Index $\indexThree$ is the upper bound to the width of the overall circuit built by the fold: if the input list is empty, then the width of the circuit is just the number of wires contained in the initial accumulator, that is, $\rcount{\contextOne;\lcOne}$. If the input list is non-empty, on the other hand, things get slightly more complicated. At each step $\ivarOne$, the step function builds a circuit $\circuitOne_\ivarOne$ of width bounded by $\indexTwo$, where $\indexTwo$ might depend on $\ivarOne$. This circuit takes as input all the wires in the accumulator, as well as the wires contained in the first element of the input list, which are $\rcount{\typeOne}$. The wires contained in remaining $\indexOne - 1 - \ivarOne$ elements have to flow alongside $\circuitOne_\ivarOne$, giving a width upper bound of $\indexTwo + (\indexOne - 1 - \ivarOne) \times \rcount{\typeOne}$ at each step $\ivarOne$. The overall width upper bound is then the maximum for $\ivarOne$ going from $0$ to $\indexOne-1$ of this quantity, i.e. precisely $\imaximum{\ivarOne}{\indexOne}{\indexTwo + (\indexOne - 1 - \ivarOne) \times \rcount{\typeOne}}$. Once again, a graphical representation of this scenario is given in Figure \ref{fig:fold-circuit}.

\begin{figure}[!ht]
	\centering
	\fbox{\parbox{.98\textwidth}{\centering
	\begin{quantikz}[column sep=2em, row sep=1em]
		\lstick{$\contextOne$} & \qwbundle{\rcount{\contextOne}} & \gate[3]{\circuitOne_0}\\
		\lstick{$\lcOne$} & \qwbundle{\rcount{\lcOne}} &  \\
		& \qwbundle{\rcount{\typeOne}} & &\qwbundle{\rcount{\typeTwo\isub{1}{\ivarOne}}} & & \gate[2]{\circuitOne_1}\\
		& \qwbundle{\rcount{\typeOne}} & & & & & \qwbundle{\rcount{\typeTwo\isub{2}{\ivarOne}}} & \ \ldots \ &\setwiretype{n}&&&&&&\\
		\wave&&&&&&&&&&&&&& \\
		\setwiretype{n}&&&&&&&\ldots \ &\setwiretype{q}\qwbundle{\rcount{\typeTwo\isub{\indexOne-1}{\ivarOne}}} &&& \gate[2]{\circuitOne_{\indexOne-1}}\\
		&\qwbundle{\rcount{\typeOne}} &&&&&&&&&&&\qwbundle{\rcount{\typeTwo\isub{\indexOne}{\ivarOne}}}&&\rstick{}
\end{quantikz}
}}
\caption{The shape of a circuit built by a fold applied to an input list of type $\listt{\indexOne}{\typeOne}$.}
\label{fig:fold-circuit}
\end{figure}

\subsubsection{Subtyping}

Notice that \PQR's type system includes two rules for subtyping, which are effectively the same rule for terms and values, respectively: \textit{csub} and \textit{vsub}.
We mentioned that our type system resembles a refinement type system, and all such systems induce a subtyping relation between types, where $\typeOne$ is a subtype of $\typeTwo$ whenever the former is ``at least as refined'' as the latter. In our case, a subtyping judgment such as $\subtypejudgment{\icontextOne}{\typeOne}{\typeTwo}$ means that for all natural values of the index variables in $\icontextOne$, $\typeOne$ is a subtype of $\typeTwo$.

\begin{figure}[!ht]
	\centering
	\fbox{\begin{mathpar}
			\inference[\textit{unit}]
			{}
			{\subtypejudgment{\icontextOne}{\unitt}{\unitt}}
			\and
			\inference[\textit{wire}]
			{}
			{\subtypejudgment{\icontextOne}{\wtypeOne}{\wtypeOne}}
			\and
			\inference[\textit{bang}]
			{\subtypejudgment{\icontextOne}{\typeOne}{\typeTwo}}
			{\subtypejudgment{\icontextOne}{\bang{\typeOne}}{\bang{\typeTwo}}}
			\and
			\inference[\textit{tensor}]
			{\subtypejudgment{\icontextOne}{\typeOne_1}{\typeOne_2}
				&
				\subtypejudgment{\icontextOne}{\typeTwo_1}{\typeTwo_2}}
			{\subtypejudgment{\icontextOne}{\tensor{\typeOne_1}{\typeTwo_1}}{\tensor{\typeOne_2}{\typeTwo_2}}}
			\and
			\inference[\textit{arrow}]
			{\subtypejudgment{\icontextOne}{\typeOne_2}{\typeOne_1}
				&
				\subtypejudgment{\icontextOne}{\typeTwo_1}{\typeTwo_2}
				&
				\leqjudgment{\icontextOne}{\indexOne_1}{\indexOne_2}
				&
				\eqjudgment{\icontextOne}{\indexTwo_1}{\indexTwo_2}}
			{\subtypejudgment{\icontextOne}{\arrowt{\typeOne_1}{\typeTwo_1}{\indexOne_1}{\indexTwo_1}}{\arrowt{\typeOne_2}{\typeTwo_2}{\indexOne_2}{\indexTwo_2}}}
			\and
			\inference[\textit{list}]
			{\subtypejudgment{\icontextOne}{\typeOne}{\typeTwo}
				&
				\eqjudgment{\icontextOne}{\indexOne}{\indexTwo}}
			{\subtypejudgment{\icontextOne}{\listt{\indexOne}{\typeOne}}{\listt{\indexTwo}{\typeTwo}}}
			\and
			\inference[\textit{circ}]
			{\sametypejudgment{\icontextOne}{\mtypeOne_1}{\mtypeOne_2}
				&
				\sametypejudgment{\icontextOne}{\mtypeTwo_1}{\mtypeTwo_2}
				&
				\leqjudgment{\icontextOne}{\indexOne}{\indexTwo}}
			{\subtypejudgment{\icontextOne}{\circt{\indexOne}{\mtypeOne_1}{\mtypeTwo_1}}{\circt{\indexTwo}{\mtypeOne_2}{\mtypeTwo_2}}}
	\end{mathpar}}
	\caption{\PQR\ subtyping rules.}
	\label{fig:subtyping-rules}
\end{figure}

We derive this kind of judgments by the rules in Figure \ref{fig:subtyping-rules}. Note that $\sametypejudgment{\icontextOne}{\typeOne}{\typeTwo}$ is shorthand for ``$\subtypejudgment{\icontextOne}{\typeOne}{\typeTwo}$ and $\subtypejudgment{\icontextOne}{\typeTwo}{\typeOne}$''. Subtyping relies in turn on a judgment of the form $\leqjudgment{\icontextOne}{\indexOne}{\indexTwo}$, which is a generalization of the semantic judgment that we used in the \crl\ type system in Section \ref{sec:a-formal-language-for-circuits}. Such a judgment asserts that for all values of the index variables in $\icontextOne$, $\indexOne$ is lesser or equal than $\indexTwo$. More formally, the meaning of $\leqjudgment{\icontextOne}{\indexOne}{\indexTwo}$ is that $\wfjudgment{\icontextOne}{\indexOne},\wfjudgment{\icontextOne}{\indexTwo}$ and for all $n_1,\dots,n_{|\icontextOne|}$: $\interpret{\wfjudgment{\icontextOne}{\indexOne}}(n_1,\dots,n_{|\icontextOne|}) \leq \interpret{\wfjudgment{\icontextOne}{\indexTwo}}(n_1,\dots,n_{|\icontextOne|})$.
Consequently, $\eqjudgment{}{\indexOne}{\indexTwo}$ is just shorthand for $\eqjudgment{\emptycontext}{\indexOne}{\indexTwo}$, which in turn is shorthand for ``$\leqjudgment{\icontextOne}{\indexOne}{\indexTwo}$ and $\leqjudgment{\icontextOne}{\indexTwo}{\indexOne}$''. We purposefully leave the decision procedure of this kind of judgments unspecified, with the prospect that, in a practical scenario, they could be delegated to an SMT solver \cite{smt-solvers}.

\subsection{Operational Semantics}
\label{sec:pqr-operational-semantics}

Operationally speaking, it does not make sense, in the \PQ\ languages, to speak of the semantics of a term \textit{in isolation}: a term is always evaluated in the context of an underlying circuit that supplies all of the term's free labels. We therefore define the operational semantics of \PQR\ as a big-step evaluation relation $\eval$ on \emph{configurations}, i.e. circuits paired with either terms or values. Intuitively, $\config{\circuitOne}{\termOne}\eval\config{\circuitTwo}{\valOne}$ means that $\termOne$ evaluates to $\valOne$ and updates $\circuitOne$ to $\circuitTwo$ as a side effect.

\begin{figure}[!ht]
	\centering
	\fbox{\begin{mathpar}
			\inference[\textit{app}]
			{\config{\circuitOne}{\termOne\sub{\valOne}{\varOne}}
				\eval \config{\circuitTwo}{\valTwo}}
			{\config{\circuitOne}{\app{(\abs{\varOne}{\typeOne}{\termOne})}{\valOne}}
				\eval \config{\circuitTwo}{\valTwo}}
			\and
			\inference[\textit{dest}]
			{\config{\circuitOne}{\termOne\sub{\valOne}{\varOne}\sub{\valTwo}{\varTwo}}
				\eval \config{\circuitTwo}{\valThree}}
			{\config{\circuitOne}{\dest{\varOne}{\varTwo}{\tuple{\valOne}{\valTwo}}{\termOne}}
				\eval \config{\circuitTwo}{\valThree}}
			\and
			\inference[\textit{force}]
			{\config{\circuitOne}{\termOne} \eval \config{\circuitTwo}{\valOne}}
			{\config{\circuitOne}{\force{(\lift{\termOne})}} \eval \config{\circuitTwo}{\valOne}}
			\and
			\inference[\textit{apply}]
			{\config{\circuitThree}{\struct{\labFour}} = \append{\circuitOne}{\struct{\labThree}}{\boxedCirc{\struct\labOne}{\circuitTwo}{\struct{\labTwo}}}}
			{\config{\circuitOne}{\apply{\boxedCirc{\struct\labOne}{\circuitTwo}{\struct{\labTwo}}}{\struct{\labThree}}} \eval \config{\circuitThree}{\struct{\labFour}}}
			\and
			\inference[\textit{box}]
			{(\lcOne,\struct{\labOne})=\freshlabels{\mtypeOne}
				&
				\config{\cidentity{\lcOne}}{\termOne} \eval \config{\cidentity{\lcOne}}{\valOne}
				&
				\config{\cidentity{\lcOne}}{\app{\valOne}{\struct{\labOne}}} \eval \config{\circuitTwo}{\struct{\labTwo}}}
			{\config{\circuitOne}{\boxt{\mtypeOne}{(\lift{\termOne})}} \eval \config{\circuitOne}{\boxedCirc{\struct{\labOne}}{\circuitTwo}{\struct{\labTwo}}}}
			\and
			\inference[\textit{return}]
			{ }
			{\config{\circuitOne}{\return{\valOne}} \eval \config{\circuitOne}{\valOne}}
			\and
			\inference[\textit{let}]
			{\config{\circuitOne}{\termOne} \eval \config{\circuitThree}{\valOne}
				&
				\config{\circuitThree}{\termTwo\sub{\valOne}{\varOne}} \eval \config{\circuitTwo}{\valTwo}}
			{\config{\circuitOne}{\letin{\varOne}{\termOne}{\termTwo}} \eval \config{\circuitTwo}{\valTwo}}
			\and
			\inference[\textit{fold-end}]
			{ }
			{\config{\circuitOne}{\app{(\fold{\ivarOne}{\valOne}{\valTwo})}{\nil}} \eval \config{\circuitOne}{\valTwo}}
			\and
			\inference[\textit{fold-step}]
			{\config{\circuitOne}{\termOne\isub{0}{\ivarOne}} \eval \config{\circuitOne}{\valFour}
				&
				\config{\circuitOne}{\app{\valFour}{\tuple{\valOne}{\valTwo}}} \eval \config{\circuitThree}{\valFive}
				\\
				\config{\circuitThree}{\app{(\fold{\ivarOne}{(\lift{\termOne\isub{\iplus{\ivarOne}{1}}{\ivarOne}})}{\valFive})}{\valTwo'}} \eval \config{\circuitTwo}{\valThree}}
			{\config{\circuitOne}{\app{(\fold{\ivarOne}{(\lift{\termOne})}{\valOne})}{(\cons{\valTwo}{\valTwo'})}} \eval \config{\circuitTwo}{\valThree}}
	\end{mathpar}}
	\caption{\PQR\ big-step operational semantics.}
	\label{fig:operational-semantics}
\end{figure}

The rules for evaluating configurations are given in Figure \ref{fig:operational-semantics}, where $\circuitOne,\circuitTwo$ and $\circuitThree$ are circuits, $\termOne$ and $\termTwo$ are terms, while $\valOne,\valTwo,\valThree,\valFour$ and $\valFive$ are values. Most evaluation rules are straightforward, with the exception perhaps of \textit{apply, box} and \textit{fold-step}.
Being the fundamental block of circuit-building, the semantics of $\applyoperator$ lies almost entirely in the way it updates the underlying circuit:

\begin{equation*}
\inference[\textit{apply}]
{\config{\circuitThree}{\struct{\labFour}} = \append{\circuitOne}{\struct{\labThree}}{\boxedCirc{\struct\labOne}{\circuitTwo}{\struct{\labTwo}}}}
{\config{\circuitOne}{\apply{\boxedCirc{\struct\labOne}{\circuitTwo}{\struct{\labTwo}}}{\struct{\labThree}}} \eval \config{\circuitThree}{\struct{\labFour}}}
\end{equation*}

The concatenation of the underlying circuit $\circuitOne$ and the applicand $\circuitTwo$ is delegated entirely to the $\appendfunction$ function, which is defined as follows.

\begin{definition}[$\appendfunction$]
	We define \emph{the append of $\boxedCirc{\struct\labOne}{\circuitTwo}{\struct{\labTwo}}$ to $\circuitOne$ on $\struct\labThree$}, written $\append{\circuitOne}{\struct\labThree}{\boxedCirc{\struct\labOne}{\circuitTwo}{\struct{\labTwo}}}$, as the function that performs the following steps:
	\begin{enumerate}
		\item Finds $\boxedCirc{\struct\labThree}{\circuitTwo'}{\struct{\labFour}}$ equivalent to $\boxedCirc{\struct\labOne}{\circuitTwo}{\struct{\labTwo}}$ such that the labels shared by $\circuitOne$ and $\circuitTwo'$ are all and only those in $\struct\labThree$,
		\item Computes $\circuitThree = \concat{\circuitOne}{\circuitTwo'}$,
		\item Returns $\config{\circuitThree}{\struct{\labFour}}$.
	\end{enumerate}
\end{definition}

Note that two circuits are \emph{equivalent} when they only differ by a renaming of labels, that is, when they have the same fundamental structure. What the renaming does, in this case, is instantiate the generic input interface $\struct\labOne$ of circuit $\circuitTwo$ with the actual labels that it is going to be appended to, namely $\struct\labThree$, and ensure that there are no name clashes between the labels occurring in the resulting $\circuitTwo'$ and those occurring in $\circuitOne$.

On the other hand, the semantics of a term of the form $\boxt{\mtypeOne}{(\lift{\termOne})}$ relies on the $\freshlabelsfunction$ function:

\begin{equation*}
\inference[\textit{box}]
{(\lcOne,\struct{\labOne})=\freshlabels{\mtypeOne}
	&
	\config{\cidentity{\lcOne}}{\termOne} \eval \config{\cidentity{\lcOne}}{\valOne}
	&
	\config{\cidentity{\lcOne}}{\app{\valOne}{\struct{\labOne}}} \eval \config{\circuitTwo}{\struct{\labTwo}}}
{\config{\circuitOne}{\boxt{\mtypeOne}{(\lift{\termOne})}} \eval \config{\circuitOne}{\boxedCirc{\struct{\labOne}}{\circuitTwo}{\struct{\labTwo}}}}
\end{equation*}

What $\freshlabelsfunction$ does is take as input a bundle type $\mtypeOne$ and instantiate fresh $\lcOne,\struct\labOne$ such that $\mjudgment{\lcOne}{\struct\labOne}{\mtypeOne}$. The wire bundle $\struct\labOne$ is then used as a dummy argument to $\valOne$, the circuit-building function resulting from the evaluation of $\termOne$. This function application is evaluated in the context of the identity circuit $\cidentity{\lcOne}$ and eventually produces a circuit $\circuitTwo$, together with its output labels $\struct{\labTwo}$. Finally, $\struct{\labOne}$ and $\struct{\labTwo}$ become respectively the input and output interfaces of the resulting boxed circuit $\boxedCirc{\struct{\labOne}}{\circuitTwo}{\struct{\labTwo}}$.
Note, at this point, that $\mtypeOne$ controls how many labels are initialized by the $\freshlabelsfunction$ function. Because $\mtypeOne$ can contain indices (e.g. it could be that $\mtypeOne \equiv \listt{3}{\qubitt}$), it follows that in \PQR\ indices are not only relevant to typing, but they also have operational value. For this reason, the semantics of \PQR\ is well-defined only on terms closed both in the sense of regular variables \textit{and} index variables, since a circuit-building function of input type, e.g., $\listt{\ivarOne}{\qubitt}$ does not correspond to any individual circuit, and therefore it makes no sense to try and box it.

The operational significance of indices is also apparent in the \textit{fold-step} rule:

\begin{equation*}
\inference[\textit{fold-step}]
{\config{\circuitOne}{\termOne\isub{0}{\ivarOne}} \eval \config{\circuitOne}{\valFour}
	&
	\config{\circuitOne}{\app{\valFour}{\tuple{\valOne}{\valTwo}}} \eval \config{\circuitThree}{\valFive}
	\\
	\config{\circuitThree}{\app{(\fold{\ivarOne}{(\lift{\termOne\isub{\iplus{\ivarOne}{1}}{\ivarOne}})}{\valFive})}{\valTwo'}} \eval \config{\circuitTwo}{\valThree}}
{\config{\circuitOne}{\app{(\fold{\ivarOne}{(\lift{\termOne})}{\valOne})}{(\cons{\valTwo}{\valTwo'})}} \eval \config{\circuitTwo}{\valThree}}
\end{equation*}

Here, the index variable $\ivarOne$ occurring free in $\termOne$ is instantiated to $0$ before evaluating $\termOne$ to obtain the step function $\valFour$. Next, the new accumulator $\valFive$ is computed. Then, before evaluating the next iteration, $\ivarOne$ is replaced with $\iplus{\ivarOne}{1}$ in $\termOne$. This way, each time $\termOne$ is evaluated, $\ivarOne$ is equal to the number of the current iteration, and the evaluation can result in a function $\valFour$ which is operationally distinct for each iteration.

\section{Type Safety and Correctness}\label{sect:safetycorrectness}
\label{sec:type-safety-and-correctness}

Because the operational semantics of \PQR\ is based on configurations, we ought to adopt a notion of well-typedness which is also based on configurations. The following definition of \textit{well-typed configuration} is thus central to our type-safety analysis. 

\begin{definition}[Well-typed Configuration]
	We say that configuration $\config{\circuitOne}{\termOne}$ is \emph{well-typed with input $\lcOne$, type $\typeOne$, width $\indexOne$ and output $\lcTwo$}, and we write $\cconfigjudgment{\lcOne}{\config{\circuitOne}{\termOne}}{\typeOne}{\indexOne}{\lcTwo}$, whenever $\circjudgment{\circuitOne}{\lcOne}{\lcTwo,\lcThree}$ for some $\lcThree$ such that $\cjudgment{\emptycontext}{\emptycontext}{\lcThree}{\termOne}{\typeOne}{\indexOne}$.
	We write $\vconfigjudgment{\lcOne}{\config{\circuitOne}{\valOne}}{\typeOne}{\lcTwo}$ whenever $\circjudgment{\circuitOne}{\lcOne}{\lcTwo,\lcThree}$ for some $\lcThree$ such that $\vjudgment{\emptycontext}{\emptycontext}{\lcThree}{\valOne}{\typeOne}$.
\end{definition}

The three results that we want to show in this section are that any well-typed term configuration $\cconfigjudgment{\lcOne}{\config{\circuitOne}{\termOne}}{\typeOne}{\indexOne}{\lcTwo}$ evaluates to some configuration $\config{\circuitTwo}{\valOne}$, that $\vconfigjudgment{\lcOne}{\config{\circuitTwo}{\valOne}}{\typeOne}{\lcTwo}$ and that $\circuitTwo$ is obtained from $\circuitOne$ by extending it with a sub-circuit of width at most $\indexOne$. These claims correspond to the \textit{subject reduction} and \textit{total correctness} properties that we will prove at the end of this section. However, both these results rely on a central lemma and on the mutual notions of \textit{realization} and \textit{reducibility}, which we first give formally.

\begin{definition}[Realization]
	We define $\realizes{\valOne}{\lcOne}{\typeOne}$, which reads \emph{$\valOne$ realizes $\typeOne$ under $\lcOne$}, as the smallest relation such that
	\begin{align*}
		\realizes{\unitv&}{\emptylc}{\unitt}\\
		\realizes{\labOne&}{\labOne:\wtypeOne}{\wtypeOne}\\
		\realizes{\valOne&}{\lcOne}{\arrowt{\typeOne}{\typeTwo}{\indexOne}{\indexTwo}} \text{ iff } \eqjudgment{}{\indexTwo}{|\lcOne|}\text{ and }\forall \valTwo: \realizes{\valTwo}{\lcTwo}{\typeOne} \implies \reducible{\app{\valOne}{\valTwo}}{\lcOne,\lcTwo}{\indexOne}{\typeTwo}\\
		\realizes{\lift{\termOne}&}{\emptylc}{\bang{\typeOne}}\text{ iff }\reducible{\termOne}{\emptylc}{0}{\typeOne}\\
		\realizes{\tuple{\valOne}{\valTwo}&}{\lcOne,\lcTwo}{\tensor{\typeOne}{\typeTwo}}\text{ iff }\realizes{\valOne}{\lcOne}{\typeOne}\text{ and }\realizes{\valTwo}{\lcTwo}{\typeTwo}\\
		\realizes{\nil&}{\emptycontext}{\listt{\indexOne}{\typeOne}}\text{ iff }\eqjudgment{}{\indexOne}{0}\\
		\realizes{\cons{\valOne}{\valTwo}&}{\lcOne,\lcTwo}{\listt{\indexOne}{\typeOne}}\text{ iff }\eqjudgment{}{\indexOne}{\iplus{\indexTwo}{1}}\text{ and }\realizes{\valOne}{\lcOne}{\typeOne}\text{ and } \realizes{\valTwo}{\lcTwo}{\listt{\indexTwo}{\typeOne}}\\
		\realizes{\boxedCirc{\struct\labOne}{\circuitOne}{\struct{\labTwo}}&}{\emptylc}{\circt{\indexOne}{\mtypeOne}{\mtypeTwo}}\text{ iff }\circjudgment{\circuitOne}{\lcOne}{\lcTwo}\text{ and } \mjudgment{\lcOne}{\struct\labOne}{\mtypeOne}\text{ and } \mjudgment{\lcTwo}{\struct{\labTwo}}{\mtypeTwo}\text{ and }\leqjudgment{}{\width{\circuitOne}}{\indexOne}
	\end{align*}
\end{definition}

\begin{definition}[Reducibility]
	We say that \emph{$\termOne$ is reducible under $\lcOne$ with type $\typeOne$ and width $\indexOne$}, and we write $\reducible{\termOne}{\lcOne}{\indexOne}{\typeOne}$, if, for all $\circuitOne$ such that $\circjudgment{\circuitOne}{\lcTwo}{\lcOne,\lcThree}$, there exist $\circuitTwo,\valOne$ such that
	\begin{enumerate}
		\item $\config{\circuitOne}{\termOne} \eval \config{\concat{\circuitOne}{\circuitTwo}}{\valOne},$
		\item $\leqjudgment{}{\width{\circuitTwo}}{\indexOne}$
		\item $\circjudgment{\circuitTwo}{\lcOne}{\lcFour}$ for some $\lcFour$ such that $\realizes{\valOne}{\lcFour}{\typeOne}$.
	\end{enumerate}
\end{definition}

Both relations, and in particular reducibility, are given in the form of unary logical relations \cite{logical-relations}. The intuition is pretty straightforward: a term is reducible with width $\indexOne$ if it evaluates correctly when paired with any circuit $\circuitOne$ which provides its free labels and if it extends $\circuitOne$ with a sub-circuit $\circuitTwo$ whose width is bounded by $\indexOne$. Realization, on the other hand, is less immediate. For most cases, realizing type $\typeOne$ loosely corresponds to being closed and well-typed with type $\typeOne$, but a value realizes an arrow type $\arrowt{\typeOne}{\typeTwo}{\indexOne}{\indexTwo}$ when its application to a value realizing $\typeOne$ is reducible with type $\typeTwo$ and width $\indexOne$.

By themselves, realization and reducibility are defined only on terms and values closed in the sense both of regular and index variables. To extend these notions to open terms and values, we adopt the standard approach of reasoning explicitly about the substitutions that could render them closed.
\begin{definition}[Closing Substitution]
	We define the set $\vsubSet$ of \emph{closing value substitutions} as the smallest subset of $\valSet\cup\termSet \to \valSet\cup\termSet$ such that
	\begin{itemize}
		\item $\emptysub\in\vsubSet$ with $\emptysub(\termOne) = \termOne$.
		\item If $\vsubOne\in\vsubSet$, $\varOne$ is a variable name and $\valOne\in\valSet$ is closed, then $\vsubOne[\varOne\mapsto\valOne]\in\vsubSet$ with $\vsubOne[\varOne\mapsto\valOne](\termOne) = \vsubOne(\termOne\sub{\valOne}{\varOne})$.
	\end{itemize}
	We define the set $\isubSet$ of \emph{closing index substitutions} as the smallest subset of $\indexSet\cup\typeSet\cup\valSet\cup\termSet \to \indexSet\cup\typeSet\cup\valSet\cup\termSet$ such that
	\begin{itemize}
		\item $\emptysub\in\isubSet$ with $\emptysub(\termOne) = \termOne$.
		\item If $\isubOne\in\isubSet$, $\ivarOne$ is an index variable name and $\indexOne\in\indexSet$ is closed, then $\isubOne[\ivarOne\mapsto\indexOne]\in\isubSet$ with $\isubOne[\ivarOne\mapsto\indexOne](\termOne) = \isubOne(\termOne\isub{\indexOne}{\ivarOne})$.
	\end{itemize}
\end{definition}

We say that $\vsubOne$ \textit{implements} a typing context $\contextOne$ using label context $\lcOne$, and we write $\vimplements{\vsubOne}{\lcOne}{\contextOne}$, when it replaces every variable $\varOne_i$ in the domain of $\contextOne$ with a value $\valOne_i$ such that $\realizes{\valOne_i}{\lcOne_i}{\contextOne(\varOne_i)}$ and $ \lcOne=\biguplus_{\varOne_i\in\operatorname{dom}(\contextOne)} \lcOne_i$. Similarly, we say that $\isubOne$ implements an index context $\icontextOne$, and we write $\iimplements{\isubOne}{\icontextOne}$, when it replaces every index variable in $\icontextOne$ with a closed index term.
This allows us to give the following fundamental lemma, which will be used while proving all other claims.

\begin{lemma}[Core Correctness]\label{lem:core-correctness}
	Let $\Pi$ be a type derivation. For all $\iimplements{\isubOne}{\icontextOne}$ and $\vimplements{\vsubOne}{\lcOne}{\isubOne(\contextOne)}$, we have that
	\begin{align*}
		\Pi \proves \cjudgment{\icontextOne}{\contextOne}{\lcTwo}{\termOne}{\typeOne}{\indexOne} &\implies \reducible{\vsubOne(\isubOne(\termOne))}{\lcOne,\lcTwo}{\isubOne(\indexOne)}{\isubOne(\typeOne)}\\
		\Pi \proves \vjudgment{\icontextOne}{\contextOne}{\lcTwo}{\valOne}{\typeOne} &\implies \realizes{\vsubOne(\isubOne(\valOne))}{\lcOne,\lcTwo}{\isubOne(\typeOne)}
	\end{align*}
\end{lemma}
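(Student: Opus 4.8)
The plan is to prove the two implications simultaneously by induction on the structure of the derivation $\Pi$, with a case analysis on its last rule. Before the induction I would establish a handful of auxiliary facts that are used throughout. (i) Realization on bundle types coincides with wire typing: $\realizes{\struct\labOne}{\lcOne}{\mtypeOne}$ holds iff $\mjudgment{\lcOne}{\struct\labOne}{\mtypeOne}$, and, more generally, a value realizing a bundle type is necessarily a wire bundle, a value realizing $\bang{\typeOne}$ is necessarily a $\liftoperator$, and a value realizing a circuit type is necessarily a boxed circuit. (ii) $\realizes{\valOne}{\lcOne}{\typeOne}$ forces $\eqjudgment{}{\rcount{\typeOne}}{|\lcOne|}$; in particular a value realizing a parameter type is realized under the empty label context, which is what licenses duplicating the parameter part $\pcontextOne$ of a context when splitting an implementing substitution $\vsubOne$ and its label context along the disjoint contexts appearing in a typing rule. (iii) Realization is monotone along subtyping, and reducibility is monotone both along subtyping of its type and along $\leq$ of its width index (the arrow case uses the usual contravariant swap). (iv) Index substitution commutes with the syntactic operations on types and terms and preserves the semantic judgments $\leqjudgment{}{}{}$ and $\eqjudgment{}{}{}$. (v) \crl\ typing admits weakening by fresh unused wires: if $\circjudgment{\circuitOne}{\lcOne}{\lcTwo}$ and $\lcThree$ is disjoint from every label in $\circuitOne$, the circuit obtained by enlarging the base context of $\circuitOne$ is typed $\lcOne,\lcThree \to \lcTwo,\lcThree$ with width larger by exactly $|\lcThree|$; moreover a well-typed circuit is never narrower than its input interface. (vi) Evaluation only ever creates labels fresh for the ambient circuit, which is immediate from the definitions of $\appendfunction$ and $\freshlabelsfunction$. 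Each of (ii), (iii), (v) goes by a short structural induction.

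Granting these, the bulk of the cases are mechanical. For \textit{unit}, \textit{lab}, \textit{var}, \textit{nil}, \textit{pair}, \textit{cons} one splits $\vsubOne$ and the label context, applies the induction hypothesis to the premises, and reads off the conclusion from the definition of realization. For \textit{abs} one extends the implementing substitution by an arbitrary realizer of the domain type, applies the induction hypothesis to the body, matches the closure annotation $\rcount{\contextOne;\lcOne}$ via (ii), and transfers reducibility of the substituted body to the application through the \textit{app} evaluation rule. The cases \textit{app}, \textit{force}, \textit{lift}, \textit{circ}, \textit{apply}, \textit{dest} and \textit{return} follow the same pattern: apply the induction hypothesis, use the shape facts in (i) to see that the relevant realizers are literally $\liftoperator$'s, boxed circuits or wire bundles, then discharge the single evaluation step and the (typically trivial) width bound, relying on $\concat{\circuitOne}{\cidentity{\lcOne}}=\circuitOne$ for the effect-free steps and on (iv) to push $\isubOne$ through the semantic side conditions of \textit{circ}. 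The subtyping rules \textit{vsub} and \textit{csub} are immediate from (iii).

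The genuinely delicate cases are \textit{let}, \textit{box} and \textit{fold}, all of which hinge on decomposing the circuit built by evaluation and bounding its width with Theorem \ref{thm:crl}. In \textit{let}, I apply the induction hypothesis to the left subterm against the ambient circuit, whose outputs already contain the wires $\lcThree$ reserved for the continuation; this yields an extension $\circuitTwo$ on the left-hand wires only, which I weaken by $\lcThree$ using (v) — legitimate by (vi) — before feeding the enlarged output to the induction hypothesis for the continuation, obtaining a second extension $\circuitThree$. The composite $\concat{\circuitTwo}{\circuitThree}$ is exactly the sub-circuit appended by the \textit{let} evaluation rule (up to associativity of $\concat$), and two applications of Theorem \ref{thm:crl} give $\width{\concat{\circuitTwo}{\circuitThree}}\le\maxf{\width{\circuitTwo}+|\lcThree|}{\width{\circuitThree}}$, which, after interpreting indices, is bounded by $\imax{\iplus{\indexOne}{\rcount{\contextOne_2;\lcOne_2}}}{\indexTwo}$. \textit{box} is structurally lighter but uses the same toolbox: the induction hypothesis produces a circuit-building function realizer, $\freshlabelsfunction$ supplies a dummy argument of the right bundle type, the function is evaluated on it from the identity circuit, and the width of the resulting standalone circuit is controlled by the function's width annotation together with the "never narrower than the input interface" half of (v). Finally, \textit{fold} requires, on top of everything above, a nested induction on the length of the list realizer: at step $j$ one instantiates the outer index substitution with $\extension{\ivarOne}{j}$, invokes the induction hypothesis on the step-function subderivation at that instantiation, threads the accumulator through, and accumulates the per-step width bounds $\iplus{\indexTwo}{\imult{(\iminus{\iminus{\indexOne}{1}}{\ivarOne})}{\rcount{\typeOne}}}$ by repeated use of Theorem \ref{thm:crl} and monotonicity of $\max$. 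I expect getting this last arithmetic to match the index $\indexThree$ of the \textit{fold} rule exactly — and, more broadly, keeping the circuit-level bookkeeping (weakening, freshness, associativity of $\concat$, the repeated invocations of Theorem \ref{thm:crl}) consistent across \textit{let} and \textit{fold} — to be the main obstacle.
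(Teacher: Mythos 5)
Your proposal is correct and follows essentially the same route as the paper, whose proof of Lemma~\ref{lem:core-correctness} is exactly an induction on the (size of the) derivation $\Pi$ with Theorem~\ref{thm:crl} supplying the width bookkeeping in the \textit{let}, \textit{box} and \textit{fold} cases; your auxiliary facts (context splitting via wire counts, weakening by idle wires, the ``no narrower than its input'' property, and the nested induction on the list realizer for \textit{fold}) are precisely the ingredients that expansion requires.
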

\begin{proof}
	By induction on the size of $\Pi$, making use of Theorem \ref{thm:crl}.
\end{proof}

Lemma \ref{lem:core-correctness} tells us that any well-typed term (resp. value) is reducible (resp. realizes its type) when we instantiate its free variables according to its contexts.
Now that we have Lemma \ref{lem:core-correctness}, we can proceed to proving the aforementioned results of subject reduction and total correctness. We start with the former, which unsurprisingly requires the following substitution lemmata.

\begin{lemma}[Index Substitution]\label{lem:index-substitution}
	Let $\Pi$ be a type derivation and let $\indexOne$ be an index such that $\wfjudgment{\icontextOne}{\indexOne}$. We have that
	\begin{align*}
		\Pi \proves \cjudgment{\icontextOne,\ivarOne}{\contextOne}{\lcOne}{\termOne}{\typeOne}{\indexTwo} &\implies \cjudgment{\icontextOne}{\contextOne\isub{\indexOne}{\ivarOne}}{\lcOne}{\termOne\isub{\indexOne}{\ivarOne}}{\typeOne\isub{\indexOne}{\ivarOne}}{\indexTwo\isub{\indexOne}{\ivarOne}},\\
		\Pi \proves \vjudgment{\icontextOne,\ivarOne}{\contextOne}{\lcOne}{\valOne}{\typeOne} &\implies \vjudgment{\icontextOne}{\contextOne\isub{\indexOne}{\ivarOne}}{\lcOne}{\valOne\isub{\indexOne}{\ivarOne}}{\typeOne\isub{\indexOne}{\ivarOne}}.
	\end{align*}
\end{lemma}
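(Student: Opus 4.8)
The plan is to proceed by mutual induction on the structure of the derivation $\Pi$, treating the computation and value statements simultaneously (the typing rules are mutually referential, e.g. \textit{abs} and \textit{return}). Before starting the main induction I would isolate a few routine preservation facts about the index substitution $(\cdot)\isub{\indexOne}{\ivarOne}$, each by a short auxiliary induction. First, \emph{well-formedness is preserved}: $\wfjudgment{\icontextOne,\ivarOne}{\indexTwo}$ together with $\wfjudgment{\icontextOne}{\indexOne}$ gives $\wfjudgment{\icontextOne}{\indexTwo\isub{\indexOne}{\ivarOne}}$, and similarly for types and contexts (by the rules of Figure \ref{fig:well-formedness}). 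Second, \emph{the index interpretation is compositional}, i.e. $\interpret{\wfjudgment{\icontextOne}{\indexTwo\isub{\indexOne}{\ivarOne}}}(n_1,\dots,n_{|\icontextOne|}) = \interpret{\wfjudgment{\icontextOne,\ivarOne}{\indexTwo}}(n_1,\dots,n_{|\icontextOne|},\interpret{\wfjudgment{\icontextOne}{\indexOne}}(n_1,\dots,n_{|\icontextOne|}))$, from which it follows at once that the semantic judgments survive substitution: $\leqjudgment{\icontextOne,\ivarOne}{\indexTwo}{\indexThree} \implies \leqjudgment{\icontextOne}{\indexTwo\isub{\indexOne}{\ivarOne}}{\indexThree\isub{\indexOne}{\ivarOne}}$, and likewise for $\eqjudgment{\cdot}{\cdot}{\cdot}$. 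Third, \emph{subtyping is preserved}, $\subtypejudgment{\icontextOne,\ivarOne}{\typeOne}{\typeTwo} \implies \subtypejudgment{\icontextOne}{\typeOne\isub{\indexOne}{\ivarOne}}{\typeTwo\isub{\indexOne}{\ivarOne}}$, by induction on the subtyping derivation of Figure \ref{fig:subtyping-rules} using the previous point. Fourth, \emph{index substitution commutes with the wire count operator}, $\rcount{\typeOne}\isub{\indexOne}{\ivarOne} = \rcount{\typeOne\isub{\indexOne}{\ivarOne}}$ (checked case by case against Definition \ref{def:resource-count}, using that $\rcount{\arrowt{\typeOne}{\typeTwo}{\indexOne}{\indexTwo}}=\indexTwo$ and $\rcount{\listt{\indexOne}{\typeOne}}=\imult{\indexOne}{\rcount{\typeOne}}$), and \emph{commutes with itself} in the usual capture-avoiding way: for $\ivarTwo$ not occurring free in $\indexOne$, substituting $\indexThree$ for $\ivarTwo$ and then $\indexOne$ for $\ivarOne$ equals substituting $\indexOne$ for $\ivarOne$ and then $\indexThree\isub{\indexOne}{\ivarOne}$ for $\ivarTwo$, uniformly on indices, types, terms and values.

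With these in hand, the bulk of the cases are mechanical. For a rule concluding $\cjudgment{\icontextOne,\ivarOne}{\contextOne}{\lcOne}{\termOne}{\typeOne}{\indexTwo}$ (or its value analogue) I apply the induction hypothesis to every typing premise, the first auxiliary fact to every well-formedness premise, and the second/third facts to every semantic and subtyping side-condition, and then reassemble the \emph{same} rule. The only genuine content is that the arithmetic relating the conclusion's annotations to the premises — the expressions built from $\rcount{\cdot}$, $+$, $-$, $\times$, $\max$ appearing in \textit{abs}, \textit{app}, \textit{return}, \textit{let}, \textit{cons} — is stable under pushing $\isub{\indexOne}{\ivarOne}$ inward, which is exactly the fourth fact; for example in \textit{let} the effect $\imax{\iplus{\indexOne}{\rcount{\contextOne_2;\lcOne_2}}}{\indexTwo}$ is sent to $\imax{\iplus{\indexOne\isub{\indexOne'}{\ivarTwo}}{\rcount{\contextOne_2\isub{\indexOne'}{\ivarTwo};\lcOne_2}}}{\indexTwo\isub{\indexOne'}{\ivarTwo}}$, with $\lcOne_2$ untouched since label contexts carry no indices. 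The \crl-level premises $\circjudgment{\circuitOne}{\lcOne}{\lcTwo}$ and the wire judgments in \textit{circ} also mention no index variables, so the boxed-circuit cases are immediate. The subtyping rules \textit{csub}/\textit{vsub} reduce to the induction hypothesis plus the third and second auxiliary facts.

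The genuinely delicate case is \textit{fold}, the only rule whose conclusion binds an index variable (and, at the level of indices, the $\imaximum{\cdot}{\cdot}{\cdot}$ former inside it). Here I first $\alpha$-rename the bound index variable of the $\foldoperator$ so that it is distinct from $\ivarOne$ and does not occur free in $\indexOne$; the induction hypothesis then applies to the step-function premise in the enlarged context (with both the bound variable and $\ivarOne$ present), and the self-commutation law is invoked repeatedly to relocate $\isub{\indexOne}{\ivarOne}$ past the binders in the accumulator types $\typeTwo\isub{0}{\ivarOne}$, $\typeTwo\isub{\iplus{\ivarOne}{1}}{\ivarOne}$, $\typeTwo\isub{\indexOne}{\ivarOne}$ and past the nested $\imaximum{\ivarOne}{\indexOne}{\iplus{\indexTwo}{\imult{(\iminus{\iminus{\indexOne}{1}}{\ivarOne})}{\rcount{\typeOne}}}}$ of the conclusion's width index, while the wire-count commutation fact lets the $\rcount{\typeOne}$ and $\rcount{\contextOne;\lcOne}$ occurrences be pushed through uniformly. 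I expect this capture-avoidance and substitution-reordering bookkeeping around \textit{fold} to be the main obstacle: it is conceptually routine but notationally heavy, and it is the one place where swapping the order of the two substitutions would silently break the argument. Everything else follows directly from the induction hypothesis and the four auxiliary facts.
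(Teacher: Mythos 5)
Your proposal is correct and follows essentially the same route as the paper, which proves this lemma by induction on the size of $\Pi$ and leaves the auxiliary facts (preservation of well-formedness, semantic judgments, subtyping, and wire counts under index substitution, plus the substitution commutation needed for \textit{fold}) implicit. Your explicit identification of the \textit{fold} case as the one requiring $\alpha$-renaming and careful reordering of substitutions is exactly the right place to focus the effort.
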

\begin{proof}
	By induction on the size of $\Pi$.
\end{proof}

\begin{lemma}[Value Substitution]\label{lem:value-substitution}
	Let $\Pi$ be a type derivation and let $\valOne$ be a value such that $\vjudgment{\icontextOne}{\pcontextOne,\contextOne_1}{\lcOne_1}{\valOne}{\typeOne}$. We have that
	\begin{align*}
		\Pi \proves \cjudgment{\icontextOne}{\pcontextOne,\contextOne_2,\varOne:\typeOne}{\lcOne_2}{\termOne}{\typeTwo}{\indexOne} &\implies \cjudgment{\icontextOne}{\pcontextOne,\contextOne_1,\contextOne_2}{\lcOne_1,\lcOne_2}{\termOne\sub{\valOne}{\varOne}}{\typeTwo}{\indexOne},\\
		\Pi \proves \vjudgment{\icontextOne}{\pcontextOne,\contextOne_2,\varOne:\typeOne}{\lcOne_2}{\valTwo}{\typeTwo} &\implies \vjudgment{\icontextOne}{\pcontextOne,\contextOne_1,\contextOne_2}{\lcOne_1,\lcOne_2}{\valTwo\sub{\valOne}{\varOne}}{\typeTwo}.
	\end{align*}
\end{lemma}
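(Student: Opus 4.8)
We argue by induction on the size of $\Pi$, proving the statement for terms and the one for values simultaneously, and distinguishing cases on the rule at the root of $\Pi$. In every case one reads off how $\Pi$ partitions its typing and label contexts among the premises, locates the premise to which $\varOne$ belongs, appeals to the induction hypothesis on that sub-derivation --- feeding it the hypothesis $\vjudgment{\icontextOne}{\pcontextOne,\contextOne_1}{\lcOne_1}{\valOne}{\typeOne}$ --- and re-applies the same rule, leaving the other premises verbatim (up to $\alpha$-renaming bound variables away from the free variables of $\valOne$). The index context $\icontextOne$ is inert throughout; the only rule with an index-variable binder in a premise is \textit{fold}, and there $\ivarOne$ is fresh for $\valOne$ --- which is typed under $\icontextOne$ --- so the value substitution commutes with the index substitutions on types occurring in that rule. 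The rules \textit{vsub} and \textit{csub} are immediate, since their subtyping and index-inequality premises mention neither the term nor the value and hence survive the substitution unchanged.

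The bookkeeping bifurcates on the nature of $\typeOne$. If $\typeOne$ is not a parameter type then $\varOne$ is linear, it occurs in exactly one side of every context split inside $\Pi$, and consequently $\valOne$ --- together with its resources $\contextOne_1,\lcOne_1$ --- flows into exactly one sub-derivation, duplicating nothing linear. If $\typeOne$ is a parameter type then $\varOne$ may be shared among several premises, but then $\valOne$ carries no linear resources (a value of parameter type is typed with a parameter-only context and an empty label context --- an easy inspection of the value rules, stable under subtyping into a parameter type), so again nothing linear is duplicated. This second case also covers the leaf rules \textit{var}, \textit{unit}, \textit{lab}, \textit{nil}, which are then settled by the standard structural properties of the parameter zone: one may weaken in, and strengthen away, unused parameter variables, since every leaf rule is stated for an arbitrary parameter context $\pcontextOne$.

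The genuinely delicate point is that the lemma asks for the conclusion to carry the \emph{same} type $\typeTwo$ --- and, for terms, the same effect $\indexOne$ --- whereas several rules build $\rcount{\cdot}$-annotations out of the very context that the substitution alters. In the \textit{abs} case, re-applying the rule yields an arrow type whose second annotation is $\rcount{\pcontextOne,\contextOne_1,\contextOne_2;\lcOne_1,\lcOne_2}$, while the original annotation is $\rcount{\pcontextOne,\contextOne_2,\varOne:\typeOne;\lcOne_2}$; the two agree precisely when $\rcount{\contextOne_1}+|\lcOne_1| = \rcount{\typeOne}$. This is the invariant that a value's linear context plus label context has the same wire count as the value's type (equivalently: parameter types and parameter contexts have wire count $0$), proved by a short induction on value typing. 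The same invariant discharges the parallel obligations in \textit{fold}, whose output type and whose effect index $\indexThree$ both mention $\rcount{\contextOne;\lcOne}$, and in \textit{let}, whose effect annotation mentions the wire count of the continuation's context: whichever premise receives the substitution, the perturbation of the relevant wire count is cancelled --- up to semantic index equality --- by the disappearance of $\varOne:\typeOne$.

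We expect this last point --- keeping the $\rcount{\cdot}$-based type and effect annotations invariant under substitution --- to be the main obstacle, the rest being the routine context-splitting familiar from linear substitution lemmas. All remaining structural rules (\textit{app}, \textit{apply}, \textit{pair}, \textit{cons}, \textit{dest}, \textit{return}, \textit{force}, \textit{lift}, \textit{circ}, \textit{box}) follow the uniform ``split the context, recurse on the side carrying $\varOne$, reassemble'' pattern.
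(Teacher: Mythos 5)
Your proposal is correct and follows the same route as the paper, whose entire proof of this lemma is the single line ``by induction on the size of $\Pi$''; you have simply made explicit the case analysis, the linear/parameter dichotomy, and the wire-count invariant $\rcount{\contextOne_1;\lcOne_1}=\rcount{\typeOne}$ needed to keep the $\rcount{\cdot}$-derived annotations stable, all of which are indeed the points the paper's induction must silently discharge. One cosmetic remark: \textit{return} also builds its effect out of $\rcount{\contextOne;\lcOne}$ and so belongs with \textit{abs}, \textit{let} and \textit{fold} rather than in your ``uniform'' list, but your invariant handles it identically.
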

\begin{proof}
	By induction on the size of $\Pi$.
\end{proof}

The main result is then stated in the following theorem.

\begin{theorem}[Subject Reduction]
	If $\cconfigjudgment{\lcOne}{\config{\circuitOne}{\termOne}}{\typeOne}{\indexOne}{\lcTwo}$ and $\config{\circuitOne}{\termOne} \eval \config{\circuitTwo}{\valOne}$, then $\vconfigjudgment{\lcOne}{\config{\circuitTwo}{\valOne}}{\typeOne}{\lcTwo}$.
\end{theorem}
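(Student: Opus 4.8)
The plan is to argue by induction on the derivation of $\config{\circuitOne}{\termOne}\eval\config{\circuitTwo}{\valOne}$. At each step one unfolds the hypothesis into a pass-through context $\lcThree$ with $\circjudgment{\circuitOne}{\lcOne}{\lcTwo,\lcThree}$ and $\cjudgment{\emptycontext}{\emptycontext}{\lcThree}{\termOne}{\typeOne}{\indexOne}$, and must exhibit some $\lcFour$ with $\circjudgment{\circuitTwo}{\lcOne}{\lcTwo,\lcFour}$ and $\vjudgment{\emptycontext}{\emptycontext}{\lcFour}{\valOne}{\typeOne}$. Two auxiliary facts, provable by routine inductions beforehand, are used pervasively: an \emph{inversion lemma} recovering from a closed typing derivation whose subject has a fixed head constructor the premises of the corresponding syntax-directed rule, up to a bounded stack of \textit{csub}/\textit{vsub} coercions (so that e.g.\ an application is always broken down by \textit{app}, at a type and effect that can only be larger); and the fact that for a wire bundle the judgments $\mjudgment{\lcOne}{\struct\labOne}{\mtypeOne}$, $\vjudgment{\emptycontext}{\emptycontext}{\lcOne}{\struct\labOne}{\mtypeOne}$ and $\realizes{\struct\labOne}{\lcOne}{\mtypeOne}$ are interderivable — bundle types being first order, there is no circularity — together with stability of \crl\ typing and wire judgments under injective relabelling.

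The reductions \textit{app}, \textit{dest}, \textit{force}, \textit{let}, \textit{return} and \textit{fold-end} affect the circuit only trivially, so the work is on the term side: after inversion one has typings of the immediate subterms; the $\beta$-like steps \textit{app}, \textit{dest} and \textit{let} close by the Value Substitution Lemma (Lemma~\ref{lem:value-substitution}) and the induction hypothesis; \textit{force} and \textit{return} are immediate; and \textit{fold-end} uses that the length index of $\nil$ must equal the one the fold expects, pinning the result type to $\typeTwo\isub{0}{\ivarOne}$. The \textit{fold-step} rule is handled in the same spirit after unfolding one layer of recursion: the Index Substitution Lemma (Lemma~\ref{lem:index-substitution}) specialises the step function's type at $\ivarOne\mapsto 0$ and $\ivarOne\mapsto\iplus{\ivarOne}{1}$, the induction hypothesis applies to each of the three sub-evaluations, and the recursive call is typed at a list one element shorter so the index annotations line up.

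The \textit{apply} rule is where circuit composition genuinely enters, and I would discharge it with Theorem~\ref{thm:crl}(1). Inverting the typing of $\apply{\boxedCirc{\struct\labOne}{\circuitThree}{\struct\labTwo}}{\struct\labThree}$ yields, boxed circuits being parameters, both $\vjudgment{\emptycontext}{\emptycontext}{\emptycontext}{\boxedCirc{\struct\labOne}{\circuitThree}{\struct\labTwo}}{\circt{\indexOne}{\mtypeOne}{\mtypeTwo}}$ and $\vjudgment{\emptycontext}{\emptycontext}{\lcThree}{\struct\labThree}{\mtypeOne}$, hence $\mjudgment{\lcThree}{\struct\labThree}{\mtypeOne}$; inverting \textit{circ} then gives $\circjudgment{\circuitThree}{\lcOne'}{\lcTwo'}$ with $\struct\labOne,\struct\labTwo$ typing $\lcOne',\lcTwo'$. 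The circuit built by $\appendfunction$ is $\concat{\circuitOne}{\circuitThree'}$, where $\circuitThree'$ is a relabelling of $\circuitThree$ sending $\struct\labOne$ to $\struct\labThree$ and avoiding clashes with $\circuitOne$; by relabelling stability $\circjudgment{\circuitThree'}{\lcThree}{\lcFour}$ with $\struct\labFour$ typing $\lcFour$, and the shared-labels side condition built into $\appendfunction$ is exactly the hypothesis of Theorem~\ref{thm:crl}. Part~(1) then yields $\circjudgment{\concat{\circuitOne}{\circuitThree'}}{\lcOne}{\lcTwo,\lcFour}$, and $\struct\labFour$ types at $\mtypeTwo$, as required.

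The hard part will be the \textit{box} rule: the value it returns — a boxed circuit — must be retyped with \emph{the same} index $\indexOne$ that decorated the arrow type of the boxed function, and re-establishing that the circuit built inside the box has width at most $\indexOne$ is genuinely semantic, since syntactic typing merely promises such a circuit rather than producing one. Here I would call on Core Correctness (Lemma~\ref{lem:core-correctness}). Inverting \textit{box} and \textit{lift} exposes a closed term typed with an arrow type $\arrowt{\mtypeOne}{\mtypeTwo}{\indexOne}{\indexTwo}$ (modulo subtyping, which can only shrink the width index) and zero effect; by Lemma~\ref{lem:core-correctness} with empty substitutions it is reducible with width $0$, hence leaves $\cidentity{\lcOne}$ untouched and evaluates to a value realizing that arrow type, and since $\freshlabels{\mtypeOne}$ yields a bundle with $\realizes{\struct\labOne}{\lcOne}{\mtypeOne}$, the arrow clause of realization makes the application of that value to $\struct\labOne$ reducible with width $\indexOne$ under $\lcOne$. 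Instantiating these reducibility statements at $\cidentity{\lcOne}$ and invoking determinism of $\eval$ pins down the two evaluations in the \textit{box} rule: the circuit built inside the box is thereby exhibited as $\concat{\cidentity{\lcOne}}{\circuitThree}$ for some $\circuitThree$ of width at most $\indexOne$, typed from $\lcOne$, with output bundle $\struct\labTwo$ realizing — hence, $\mtypeTwo$ being a bundle type, typing at — $\mtypeTwo$. Prepending $\cidentity{\lcOne}$ preserves both width and typing by a direct computation, so the \textit{circ} rule applies and gives the boxed value type $\circt{\indexOne}{\mtypeOne}{\mtypeTwo}$, while the underlying circuit is left untouched and still types. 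Beyond the clerical bookkeeping of how $\lcThree$ is split and recombined across the \textit{app}/\textit{let}/\textit{fold} inversions, the delicate point is exactly this appeal to the logical relation: one must check it is not circular — Lemma~\ref{lem:core-correctness} is proved independently, by induction on derivations, before Subject Reduction — and that $\eval$ is deterministic enough for its witness to be reusable.
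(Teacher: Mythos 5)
Your proposal is correct and follows essentially the same route as the paper: induction on the evaluation derivation with case analysis on the last rule, the Value Substitution Lemma discharging the \textit{app}, \textit{dest} and \textit{let} cases, the Index Substitution Lemma handling \textit{fold-step}, and Core Correctness supplying the semantic width bound needed to retype the boxed circuit in the \textit{box} case. The extra scaffolding you make explicit (inversion up to subsumption, the equivalence of wire judgments and realization for bundle types, relabelling stability and the appeal to Theorem~\ref{thm:crl}(1) in the \textit{apply} case, and determinism of $\eval$ to reuse the reducibility witness) is exactly the bookkeeping the paper leaves implicit.
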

\begin{proof}
	By induction on the derivation of $\config{\circuitOne}{\termOne} \eval \config{\circuitTwo}{\valOne}$ and case analysis on the last rule used in its derivation. Lemma \ref{lem:value-substitution} is essential to the \textit{app,dest} and \textit{let} cases, while Lemma \ref{lem:index-substitution} is used in the \textit{fold-step} case. Lemma \ref{lem:core-correctness} is essential to the \textit{box} case, as it is the only case in which the side effect of the evaluation (the circuit built by the function being boxed), whose preservation is the a matter of correctness, becomes a value (the resulting boxed circuit).
\end{proof}

Of course, type soundness is not enough: we also want the resource analysis carried out by our type system to be correct, as stated in the following theorem.

\begin{theorem}[Total Correctness]
	If $\cconfigjudgment{\lcOne}{\config{\circuitOne}{\termOne}}{\typeOne}{\indexOne}{\lcTwo}$, then there exist $\circuitTwo,\valOne$ such that $\config{\circuitOne}{\termOne} \eval \config{\concat{\circuitOne}{\circuitTwo}}{\valOne}$ and $\leqjudgment{}{\width{\circuitTwo}}{\indexOne}$.
\end{theorem}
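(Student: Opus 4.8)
The idea is that this statement is an almost immediate corollary of Lemma~\ref{lem:core-correctness} (Core Correctness) once the relevant definitions are unfolded; essentially all the real work — the logical-relations argument and the use of Theorem~\ref{thm:crl} to control how widths compose — has already been carried out in that lemma. So the plan is to instantiate Core Correctness at the \emph{empty} substitutions and then read off items 1 and 2 of the definition of reducibility.

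First I would unfold the hypothesis $\cconfigjudgment{\lcOne}{\config{\circuitOne}{\termOne}}{\typeOne}{\indexOne}{\lcTwo}$: by definition there is a label context $\lcThree$ such that $\circjudgment{\circuitOne}{\lcOne}{\lcTwo,\lcThree}$ and $\cjudgment{\emptycontext}{\emptycontext}{\lcThree}{\termOne}{\typeOne}{\indexOne}$. Next I would observe that the empty index substitution vacuously satisfies $\iimplements{\emptysub}{\emptyicontext}$ and that the empty value substitution vacuously satisfies $\vimplements{\emptysub}{\emptylc}{\emptysub(\emptycontext)}$ — there is nothing to replace and the induced label context is empty. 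Applying Lemma~\ref{lem:core-correctness} to the derivation of $\cjudgment{\emptycontext}{\emptycontext}{\lcThree}{\termOne}{\typeOne}{\indexOne}$ then yields $\reducible{\termOne}{\emptylc,\lcThree}{\indexOne}{\typeOne}$, i.e.\ $\reducible{\termOne}{\lcThree}{\indexOne}{\typeOne}$, since $\vsubOne$ and $\isubOne$ act trivially here.

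Finally I would unfold the definition of reducibility and instantiate its universally quantified circuit with $\circuitOne$ itself. This instantiation is legitimate precisely because $\circjudgment{\circuitOne}{\lcOne}{\lcTwo,\lcThree}$ exhibits $\circuitOne$ as a circuit whose output label context contains $\lcThree$ (with $\lcTwo$ playing the role of the ``remaining outputs'' in the definition of reducibility). Reducibility then delivers $\circuitTwo$ and $\valOne$ with $\config{\circuitOne}{\termOne}\eval\config{\concat{\circuitOne}{\circuitTwo}}{\valOne}$ and $\leqjudgment{}{\width{\circuitTwo}}{\indexOne}$, which are exactly the two conclusions of the theorem. (Reducibility also provides $\circjudgment{\circuitTwo}{\lcThree}{\lcFour}$ with $\realizes{\valOne}{\lcFour}{\typeOne}$; this clause is not needed for the present statement, but combined with Subject Reduction it is what would let one conclude that the resulting configuration $\config{\concat{\circuitOne}{\circuitTwo}}{\valOne}$ is itself well-typed with input $\lcOne$, type $\typeOne$ and output $\lcTwo$.) The only points that require any care at this level are checking that the empty substitutions implement the empty contexts and matching up the label-context names between the definition of well-typed configuration and the definition of reducibility; there is no genuine obstacle here, as all the difficulty is concentrated in Lemma~\ref{lem:core-correctness}.
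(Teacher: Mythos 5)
Your proposal is correct and follows exactly the paper's own argument: unfold the definition of well-typed configuration, note that empty contexts are implemented by empty closing substitutions, apply Lemma~\ref{lem:core-correctness} to obtain $\reducible{\termOne}{\lcThree}{\indexOne}{\typeOne}$, and instantiate the universal quantifier in the definition of reducibility with $\circuitOne$ itself. The only difference is that you spell out the instantiation step slightly more explicitly than the paper does.
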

\begin{proof}
	By definition, $\cconfigjudgment{\lcOne}{\config{\circuitOne}{\termOne}}{\typeOne}{\indexOne}{\lcTwo}$ entails that $\circjudgment{\circuitOne}{\lcOne}{\lcTwo,\lcThree}$ and $\cjudgment{\emptycontext}{\emptycontext}{\lcThree}{\termOne}{\typeOne}{\indexOne}$. Since an empty context is trivially implemented by an empty closing substitution, by Lemma \ref{lem:core-correctness} we get $\reducible{\termOne}{\lcThree}{\indexOne}{\typeOne}$, which by definition entails that there exist $\circuitTwo,\valOne$ such that $\config{\circuitOne}{\termOne}\eval\config{\concat{\circuitOne}{\circuitTwo}}{\valOne}$ and $\leqjudgment{}{\width{\circuitTwo}}{\indexOne}$.
\end{proof}

\section{A Practical Example}
\label{sec:practical-examples}

This section provides an example of how \PQR\ can be used to verify the resource usage of realistic quantum algorithms. In particular, we use our language to implement the QFT algorithm \cite{qft,nielsen-chuang} and verify that the circuits it produces have width no greater than the size of their input, i.e. that the QFT algorithm does not overall use additional ancillary qubits.

\begin{figure}[!ht]
	\centering
	\fbox{\parbox{.98\textwidth}{\centering
		\begin{align*}
			\mathit{qft} &\triangleq \fold{\ivarTwo}{\mathit{qftStep}}{\nil}\\
			\mathit{qftStep} &\triangleq \lift{(\return{\abs{\tuple{\mathit{qs}}{q}}{\tensor{\listt{\ivarTwo}{\qubitt}}{\qubitt}}{\\&
						\letin{\tuple{n}{\mathit{qs}}}{\app{\mathit{qlen}}{\mathit{qs}}}{\\&
							\letin{\mathit{revQs}}{\app{\mathit{rev}}{\mathit{qs}}}{\\&
								\letin{\tuple{q}{\mathit{qs}}}{\app{(\fold{\ivarThree}{(\lift{(\app{\mathit{rotate}}{n})})}{\tuple{q}{\nil}})}{\mathit{revQs}}}{\\&
									\letin{q}{\apply{\mathsf{H}}{q}}{\\&
										\return{(\cons{q}{\mathit{qs}})}}}}}}}})
			\\\\
			\mathit{rotate} &\triangleq \abs{n}{\natt}{\return{\abs{\tuple{\tuple{q}{\mathit{cs}}}{c}}{\tensor{(\tensor{\qubitt}{\listt{\ivarThree}{\qubitt}})}{\qubitt}}{\\&
						\letin{\tuple{m}{\mathit{cs}}}{\app{\mathit{qlen}}{\mathit{cs}}}{\\&
							\letin{\mathit{rgate}}{\app{\mathsf{makeRGate}}{(n+1-m)}}{\\&
								\letin{\tuple{q}{c}}{\apply{\mathit{rgate}}{\tuple{q}{c}}}{\\&
									\return{\tuple{q}{\cons{c}{\mathit{cs}}}}}}}}}}
									\end{align*}
	}}
	\caption{A \PQR\ implementation of the Quantum Fourier Transform circuit family. The usual syntactic sugar is employed.}
	\label{fig:qft-implementation}
\end{figure}

The \PQR\ implementation of the QFT algorithm is given in Figure \ref{fig:qft-implementation}. As we walk through the various parts of the program, be aware that we will focus on the resource aspects of the algorithm, ignoring much of its actual meaning. Starting bottom-up, we assume that we have an encoding of naturals in the language and that we can perform arithmetic on them. We also assume some primitive gates and gate families: $\mathsf{H}$ is the boxed circuit corresponding to the Hadamard gate and has type $\circt{1}{\qubitt}{\qubitt}$, whereas the $\mathsf{makeRGate}$ function has type $\arrowt{\natt}{\circt{2}{\tensor{\qubitt}{\qubitt}}{\tensor{\qubitt}{\qubitt}}}{0}{0}$ and produces instances of the parametric controlled $R_n$ gate.

\begin{figure}[!ht]
	\centering
	\fbox{\parbox{.98\textwidth}{
			\begin{align*}
				\mathit{qlen} &\triangleq \abs{\mathit{ql}}{\listt{\ivarOne}{\qubitt}}{\\&
					\letin{\varOne}{\app{(\fold{\ivarTwo}{\mathit{qlenStep}}{\tuple{0}{\nil}})}{\mathit{ql}}}{\\&
						\dest{n}{\mathit{revQl}}{\varOne}{\\&
							\letin{\mathit{preQl}}{\app{\mathit{rev}}{\mathit{revQl}}}{\\&
								\return{\tuple{n}{\mathit{preQl}}}}}}}
				\\
				\mathit{qlenStep} &\triangleq \lift{(\return{\abs{\varOne}{\tensor{(\tensor{\natt}{\listt{\ivarOne}{\qubitt}})}{\qubitt}}{\\&
							\dest{\mathit{acc}}{q}{\varOne}{\\&
								\dest{n}{\mathit{revQl}}{\mathit{acc}}{\\&
									\return{\tuple{n+1}{\cons{q}{\mathit{revQl}}}}}}}})}
				\\\\
				\mathit{rev} &\triangleq \fold{\ivarOne}{\mathit{revStep}}{\nil}
				\\
				\mathit{revStep} &\triangleq \lift{(\return{\abs{\varOne}{\tensor{\listt{\ivarOne}{\qubitt}}{\qubitt}}{\\&
							\dest{\mathit{revQl}}{q}{\varOne}{\return({\cons{q}{\mathit{revQl}})}}}})}
	\end{align*}}}
	\caption{The implementation of the auxiliary functions \textit{qlen} and \textit{rev}.}
	\label{fig:aux-fun-implementation}
\end{figure}

On the other hand, $\mathit{qlen}$ and $\mathit{rev}$ stand for regular language terms which implement respectively the linear list length and reverse functions. Their implementation is given in Figure \ref{fig:aux-fun-implementation} and they have types $\mathit{qlen}::\arrowt{\listt{\ivarOne}{\qubitt}}{(\tensor{\natt}{\listt{\ivarOne}{\qubitt}})}{\ivarOne}{0}$ and $\mathit{rev}::\arrowt{\listt{\ivarOne}{\qubitt}}{\listt{\ivarOne}{\qubitt}}{\ivarOne}{0}$ in our type system.
We now turn our attention to the actual QFT algorithm. Function $\mathit{qftStep}$ builds a single step of the QFT circuit. The width of the circuit produced at step $\ivarTwo$ is dominated by the folding of the $\app{\mathit{rotate}}{n}$ function, which applies controlled rotations between appropriate pairs of qubits and has type
\begin{equation}
	{\arrowt{\tensor{(\tensor{\qubitt}{\listt{\ivarThree}{\qubitt}})}{\qubitt}}{\tensor{\qubitt}{\listt{\iplus{\ivarThree}{1}}{\qubitt}}}{\iplus{\ivarThree}{2}}{0}},
\end{equation}
meaning that $\app{\mathit{rotate}}{n}$ rearranges the structure of its inputs, but overall does not introduce any new wire. We fold this function starting from an accumulator $\tuple{q}{\nil}$, meaning that we can give $\fold{\ivarTwo}{(\lift{(\app{\mathit{rotate}}{n}}))}{\tuple{q}{\nil}}$ type as follows:
\begin{equation}
	\inference[\textit{fold}]
	{\vjudgment{\ivarOne,\ivarTwo,\ivarThree}{n:\natt}{\emptycontext}{\lift{(\app{\mathit{rotate}}{n}})}{\bang{({\arrowt{\tensor{(\tensor{\qubitts}{\listt{\ivarThree}{\qubitts}})}{\qubitts}}{\tensor{\qubitts}{\listt{\iplus{\ivarThree}{1}}{\qubitts}}}{\iplus{\ivarThree}{2}}{0}})}}
		\\
		\vjudgment{\ivarOne,\ivarTwo}{q:\qubitts}{\emptycontext}{\tuple{q}{\nil}}{\tensor{\qubitts}{\listt{0}{\qubitts}}}
		&
		\wfjudgment{\ivarOne,\ivarTwo}{\ivarTwo}
		&
		\wfjudgment{\ivarOne,\ivarTwo}{\qubitts}
		& 
	}
	{\vjudgment{\ivarOne,\ivarTwo}{n:\natt,q:\qubitts}{\emptycontext}{\fold{\ivarThree}{\lift{(\app{\mathit{rotate}}{n}})}{\tuple{q}{\nil}}}{\arrowt{\listt{\ivarTwo}{\qubitts}}{\tensor{\qubitts}{\listt{\ivarTwo}{\qubitts}}}{\iplus{\ivarTwo}{1}}{1}}}
\end{equation}
where $\qubitts$ is shorthand for $\qubitt$ and where we implicitly use the fact that $\eqjudgment{\ivarOne,\ivarTwo}{\imax{1}{\imaximum{\ivarThree}{\ivarTwo}{\iplus{\iplus{\ivarThree}{2}}{\imult{(\iminus{\iminus{\ivarTwo}{1}}{\ivarThree})}{1}}}}}{\iplus{\ivarTwo}{1}}$ to simplify the arrow's width annotation using \textit{vsub} and the \textit{arrow} subtyping rule. Next, we fold over $\mathit{revQs}$, which has the same elements as $\mathit{qs}$ and thus has length $\ivarTwo$, and we obtain that the fold produces a circuit whose width is bounded by $\iplus{\ivarTwo}{1}$. Therefore, $\mathit{qftStep}$ has type
\begin{equation}
	\bang{(\arrowt{\tensor{(\listt{\ivarTwo}{\qubitt}}{\qubitt})}{\listt{\iplus{\ivarTwo}{1}}{\qubitt}}{\iplus{\ivarTwo}{1}}{0})},
\end{equation}
which entails that when we pass it as an argument to the topmost $\foldoperator$ together with $\nil$ we can conclude that the type of the $\mathit{qft}$ function is

\begin{equation}
	\inference[\textit{fold}]
	{
		\vjudgment{\ivarOne,\ivarTwo}{\emptycontext}{\emptycontext}{\mathit{qftStep}}{\bang{(\arrowt{\tensor{(\listt{\ivarTwo}{\qubitt}}{\qubitt})}{\listt{\iplus{\ivarTwo}{1}}{\qubitt}}{\iplus{\ivarTwo}{1}}{0})}}
		\\
		\vjudgment{\ivarOne}{\emptycontext}{\emptycontext}{\nil}{\listt{0}{\qubitt}}
		&
		\wfjudgment{\ivarOne}{\ivarOne}
		&
		\wfjudgment{\ivarOne}{\qubitt}
	}
	{\vjudgment{\ivarOne}{\emptycontext}{\emptycontext}{\fold{\ivarTwo}{\mathit{qftStep}}{\nil}}{\arrowt{\listt{\ivarOne}{\qubitt}}{\listt{\ivarOne}{\qubitt}}{\ivarOne}{0}}}
\end{equation}
where we once again implicitly simplify the arrow type using the fact that $\eqjudgment{\ivarOne}{\imax{0}{\imaximum{\ivarTwo}{\ivarOne}{\iplus{\iplus{\ivarTwo}{1}}{\imult{(\iminus{\iminus{\ivarOne}{1}}{\ivarTwo})}{1}}}}}{\ivarOne}$. This concludes our analysis and the resulting type tells us that $\mathit{qft}$ produces a circuit of width at most $\ivarOne$ on inputs of size $\ivarOne$, without overall using any additional wires. If we instantiate $\ivarOne$ to $3$, for example, we can apply $\mathit{qft}$ to a list of $3$ qubits to obtain the circuit shown in Figure \ref{fig:qft-circuit}, whose width is exactly $3$.

\begin{figure}[!ht]
	\centering
	\fbox{\parbox{.98\textwidth}{\centering
		\begin{quantikz}[column sep=1em, row sep=1em]
			\\
			\lstick{$q_3$} &&&& \gate{R_3} & \gate{R_2} & \gate{H} & \rstick{$q_3$}\\
			\lstick{$q_2$} && \gate{R_2} & \gate{H} && \ctrl{-1} && \rstick{$q_2$}\\
			\lstick{$q_1$} & \gate{H} & \ctrl{-1} && \ctrl{-2} &&& \rstick{$q_1$}\\
			\end{quantikz}
		}}
	\caption{The circuit of input size $3$ produced by $\app{\mathit{qft}}{(\cons{q_1}{\cons{q_2}{\cons{q_3}{\nil}}})}$.}
	\label{fig:qft-circuit}
\end{figure}

To conclude this section, note that for ease of exposition $\mathit{qft}$ actually produces the \textit{reversed} QFT circuit. This is not a problem, since the two circuits are equivalent resource-wise and the actual QFT circuit can be recovered by boxing the result of $\mathit{qft}$ and reversing it via a primitive operator. Besides, note that \Quipper's internal implementation of the QFT is also reversed \cite{quipper-qft}.

\section{Related Work}
The metatheory of quantum circuit description languages, and in 
particular of \Quipper-style languages, has been the subject of quite some 
work in recent years, starting with Ross's thesis on \PQS\ \cite{proto-quipper-s} and going forward with Selinger and Rios's \PQM\ \cite{proto-quipper-m}. In the last five years, some proposals have also appeared for more expressive type systems or for language extensions that can handle non-standard language features, such as the so-called \emph{dynamic lifting} \cite{proto-quipper-l,proto-quipper-dyn,proto-quipper-k}, available in the \Quipper\ language, or dependent types \cite{proto-quipper-d}. Although some embryonic contributions in the direction of analyzing the size of circuits produced using \Quipper\ have been given \cite{gate-count}, no contribution tackles the problem of deriving resource bounds \emph{parametric} on the size of the input. In this, the ability to have types which depend on the input, certainly a feature of \PQD\ \cite{proto-quipper-d}, is not useful for the analysis of intensional attributes of the underlying circuit, simply because such attributes are not visible in types.

If we broaden the horizon to quantum programming languages other than \Quipper, 
we come across, for example, the recent works of Avanzini et al. \cite{quantum-transformer} and Liu et al. \cite{quantum-weakest} on adapting the classic weakest precondition technique to the cost analysis of quantum programs, which however focus on programs in an imperative language. The work of Dal Lago et al. \cite{quantum-implicit-complexity} on a quantum language which characterizes complexity classes for quantum polynomial time should certainly be remembered: even though the language allows the use of higher-order functions, the manipulation of quantum data occurs directly and not through circuits. Similar considerations hold for the recent work of Hainry et al. \cite{quantum-polynomial-time} and Yamakami's algebra of functions \cite{algebra-of-functions} in the style of Bellantoni and Cook \cite{polytime-functions-cook}, both characterizing quantum polynomial time.

If we broaden our scope further and become interested in the analysis of the 
cost of classical or probabilistic programs, we face a 
vast literature, with contributions employing a variety of techniques on heterogeneous languages and calculi: from functional programs \cite{ho-meets-fo,amortized-analysis,resource-aware-ml} and term 
rewriting systems \cite{complexity-rewriting,tyrolean-complexity,giesl-rewriting} to probabilistic \cite{runtime-analysis-probabilistic} and object-oriented programs \cite{java-analysis,aprove}. 
In this context, the resource under analysis is often assumed to be computation 
\emph{time}, which is relatively easy to analyze given its strictly monotonic nature. Circuit width, although monotonically non-decreasing, evolves in a way that depends on a non-monotonic quantity, i.e. the number of wires discarded by a circuit. As a result, width has the flavor of space and its analysis is less straightforward.

It is also worth mentioning that linear dependent types can be seen as a specialized version of refinement types \cite{refinement-ml}, which have been used extensively in the literature to automatically verify interesting properties of programs \cite{refinement-incremental-complexity,refinement-effects}. In particular, the work of Vazou et al. on \LiquidHaskell\ \cite{refinement-haskell,liquid-haskell} has been of particular inspiration, on account of \Quipper\ being embedded precisely in \Haskell. The liquid type system \cite{liquid-types} of \LiquidHaskell\ relies on SMT solvers to discharge proof obligations and has been used fruitfully to reason about both the correctness and the resource consumption (mainly time complexity) of concrete, practical programs \cite{liquidate-your-assets}.

\section{Generalization to Other Resource Types}
\label{sec:generalization-to-other-resource-types}

This work focuses on estimating the \emph{width} of the circuits produced by \Quipper\ programs. This choice is dictated by the fact that the width of a circuit corresponds to the maximum number of distinct wires, and therefore individual qubits, required to execute it. Nowadays, this is considered as one of the most precious resources in quantum computing, and as such must be kept under control. However, this does not mean that our system could not be adapted to the estimation of other parameters. This section outlines how this may be possible.

First, estimating strictly monotonic resources, such as the 
total \emph{number of gates} in a circuit, is possible and in fact simpler 
than estimating width. A \emph{single} index 
term $\indexOne$ that measures the number of gates in the circuit built by a computation would be enough to carry out this analysis. This index would be appropriately increased any time an $\applyoperator$ instruction is executed, while sequencing two terms via $\letoperator$ would simply add together the respective indices.

If the parameter of interest were instead the \emph{depth} of the circuit, then the approach would have to be slightly different. Although in principle it would be possible to still rely on a single index $\indexOne$, this would give rise to a very coarse approximation, effectively collapsing the analysis of depth to a gate count analysis. A more precise approximation could instead be obtained by keeping track of depth \emph{locally} rather than \emph{globally}. More specifically, it would be sufficient to decorate each occurrence of a wire type $\wtypeOne$ with an index term $\indexOne$ so that if a label $\labOne$ were typed with $\wtypeOne^\indexOne$, it would mean that the sub-circuit rooted in $\labOne$ has a depth at most equal to $\indexOne$.

Finally, it should be mentioned that the resources considered, i.e. the depth, width, and gate count of a circuit, can be further refined so as to 
take into account only \emph{some} kinds of wires and gates. For instance, one could want to keep track of the maximum number of \emph{qubits} needed, ignoring the number of classical bits, or at least distinguishing the two parameters, which of course have distinct levels of criticality in current quantum hardware.

\section{Conclusion and Future Work}
In this paper we introduced a linear dependent type system based on index refinements and effect typing for the paradigmatic calculus \PQ, with the purpose of using it to derive upper bounds on the width of the circuits produced by programs. We proved not only the classic type safety properties, but also that the upper bounds derived via the system are correct. We also showed how our system can verify a realistic quantum algorithm and elaborated on some ideas on how our technique could be adapted to other crucial resources types, like gate count and circuit depth.
Ours is the first type system designed specifically for the purpose of resource analysis to target circuit description languages such as \Quipper. Technically, the main novelties are the smooth combination of effect typing and index refinements, but also the proof of correctness, in which reducibility and effects are shown to play well together.

Among topics for further work, we can identify three main research directions. First and foremost, it would be valuable to investigate the ideas presented in this paper from a more practical perspective, that is, to provide a prototype implementation of the language and, more importantly, of the type-checking procedure. This would require understanding the role that SMT solvers may play in discharging the semantic judgments which we use pervasively in our approach.

Staying instead on the theoretical side of things, on the one hand we have the prospect of denotational semantics: most incarnations of \PQ\ are endowed with categorical semantics that model both circuits and the terms of the language that build them \cite{proto-quipper-m,proto-quipper-l,proto-quipper-d,proto-quipper-dyn}. We already mentioned how the intensional nature of the quantity under analysis renders the formulation of an abstract categorical semantics for \PQR\ and its circuits a nontrivial task, but we believe that one such semantics would help \PQR\ fit better in the \PQ\ landscape.

On the other hand, in Section \ref{sec:generalization-to-other-resource-types} we briefly discussed how our system could be modified to handle the analysis of different resource types. It would be interesting to test this path and to investigate the possibility of \textit{actually generalizing} our resource analysis, that is, of making it parametric on the kind of resource being analyzed. This would allow for the same program in the same language to be amenable to different forms of verification, in a very flexible fashion.
\

\bibliographystyle{plain}
\bibliography{bibliography}

\end{document}